\newtheorem{definition}{Definition}
\newtheorem{remark}{Remark}
\newtheorem{lemma}{Lemma}
\newtheorem{theorem}{\textbf{Theorem}}
\begin{document}


\title{Secure Degrees of Freedom of Wireless X Networks Using Artificial Noise Alignment}

\author{Zhao~Wang, \IEEEmembership{Student Member, IEEE,}
        Ming~Xiao, \IEEEmembership{Senior Member, IEEE,}
        Mikael~Skoglund, \IEEEmembership{Senior Member, IEEE,}
        and H. Vincent Poor, \IEEEmembership{Fellow, IEEE}
        \thanks{Z. Wang, M. Xiao and M. Skoglund are with the Department of Communication Theory, School of Electrical Engineering, Royal Institute of Technology (KTH), Stockholm,
Sweden (E-mail:\{zhaowang, mingx, skoglund\}@kth.se).

         H. V. Poor is with the Department of Electrical Engineering, Princeton University, Princeton, NJ (E-mail:poor@princeton.edu).
         
         The research was supported in part by the U.S. National Science Foundation under Grant CMMI-1435778.}
         }

\markboth{Submitted for publication}
{Zhao Wang}

\vspace{-6ex}

\maketitle

\begin{abstract}
The problem of transmitting confidential messages in $M \times K$ wireless X networks is considered, in which each transmitter intends to send one confidential message to every receiver. In particular, the secure degrees of freedom (SDOF) of the considered network are studied based on an artificial noise alignment (ANA) approach, which integrates interference alignment and artificial noise transmission. At first, an SDOF upper bound is derived for the $M \times K$ X network with confidential messages (XNCM) to be $\frac{K(M-1)}{K+M-2}$. By proposing an ANA approach, it is shown that the SDOF upper bound is tight when $K=2$ for the considered XNCM with time/frequency varying channels. For $K \geq 3$, it is shown that SDOF of $\frac{K(M-1)}{K+M-1}$ can be achieved, even when an external eavesdropper is present. The key idea of the proposed scheme is to inject artificial noise into the network, which can be aligned in the interference space at receivers for confidentiality. Moreover, for the network with no channel state information at transmitters, a blind ANA scheme is proposed to achieve SDOF of $\frac{K(M-1)}{K+M-1}$ for $K,M \geq 2$, with reconfigurable antennas at receivers. The proposed method provides a linear approach to secrecy coding and interference alignment.
\end{abstract}

\begin{IEEEkeywords}
	Secure degrees of freedom, artificial noise, interference alignment, wireless X network
\end{IEEEkeywords}

\section{Introduction}

\subsection{Background and Motivation}

The notion of secrecy capacity was introduced by Wyner \cite{Wyner1975} in the context of the wire-tap channel, in which a legitimate transmitter intends to send a confidential message to a legitimate receiver by hiding it from a degraded eavesdropper. Later the non-degraded wire-tap channel \cite{Csiszar1978} and Gaussian wire-tap channel \cite{Cheong1978} were studied to generalize Wyner's work. In recent years, multiuser secret communications has drawn substantial research attention. For example, the interference channel and broadcast channel with secrecy constraints were studied in \cite{RLiu2008} and \cite{JXu2009}, multiple access channels with secrecy constraints were investigated in \cite{Tekin2008}, \cite{Liang2008} and \cite{Awan2013}, the relay-eavesdropper channel was studied in \cite{Lai2008}, and parallel relay channels were considered in \cite{Awan2012} and \cite{ZLi2006}. Usually the exact secrecy capacity region is difficult to find for most multiuser networks. As a consequence, the secure degrees of freedom (SDOF) which serves as an approximation of the secrecy capacity in the high signal-to-noise ratio (SNR) regime has been widely investigated recently \cite{OOKoy2011,GBag2010,XHe2010,AKhisti2013,Jianwei2012,Jianwei2013}.

The secrecy capacity of the original wire-tap channel \cite{Wyner1975} is essentially the mutual information difference between the legitimate user pair and transmitter-eavesdropper pair, which renders a vanishing SDOF in the high SNR regime. However, positive SDOF can be achieved for some other multiuser networks, e.g., the multi-antenna compound wiretap channel \cite{Khisti2011}, the interference channel \cite{OOKoy2011,Jianwei2013}, the broadcast and multiple access channels with confidential messages \cite{Jianwei2012} \emph{et al}. These results reveal the fact that interference can have positive impact on the secrecy capacity of networks, because it naturally serves as jamming to conceal messages from eavesdroppers. The assistance of interference in secure communication is well addressed in \cite{Xiaojun2011}.

As a novel approach to handle interference in multiuser networks, interference alignment (IA)\cite{Cadambe2008} provides advantages for limiting the information leakage of confidential messages. Intuitively speaking, IA can pack the undesired messages into a reduced dimensional interference subspace at receivers, where the signals containing confidential messages are superimposed. Therefore, it naturally brings difficulty for the receiver when it tries to decode the information from the interference subspace. It is first noted in \cite{OOKoy2011} that by combining IA and random binning \cite{Wyner1975}, SDOF of $\frac{K(K-1)}{2(K-2)}$ can be achieved in the time/frequency varying $K$-user Gaussian interference channel. By adopting the Wyner random binning method to provide a secret codebook, IA has been generalized to different networks to obtain positive SDOF, e.g., the multi-antenna compound wiretap channel \cite{Khisti2011}, and the multi-antenna wiretap channel with block fading channels \cite{MKobayashi2011}. The key idea of random binning is to provide randomness to the codebook, such that the eavesdropper is not able to tell the exact codeword from the randomnized codebook. From a different transmission approach, artificial noise (AN) works in another efficient way of providing randomness to the codebook, which aims to mask the confidential signal at the eavesdropper \cite{Goel2008}. The AN can be chosen to be a Gaussian process. When the power of the AN is high enough to be comparable with the message power, it can provide enough randomness with the maximum differential entropy to confuse decoding. As studied in \cite{Jianwei2012,Jianwei2013,SYang2013} and \cite{Zhao2014a}, the transmission of AN and IA can be integrated to achieve the optimal SDOF of different multiuser networks. The proposed artificial noise alignment (ANA) schemes provide a different perspective and an efficient approach for investigating the SDOF of networks: instead of random binning, we can inject AN into the confidential message subspace at eavesdroppers. By only considering signal dimensions, the approach of aligning AN to a certain subspace can offer a better transmission design than random binning in terms of SDOF. For instance, the optimal SDOF of the $K$-user Gaussian interference channel with confidential messages (ICCM) is shown to be $\frac{K(K-1)}{2K-1}$ \cite{Jianwei2013} for constant channel state, achieved by an ANA scheme. However, by random binning, only the inferior SDOF $\frac{K(K-2)}{2(K-1)}$ can be achieved \cite{OOKoy2011}. Likewise, the ANA scheme also achieves the optimal SDOF for MIMO broadcast channels \cite{SYang2013} and two hop interference channels \cite{Zhao2014a} with delayed channel state information at transmitters (CSIT), which can be seen as a non-trivial generalization of the Maddah-Ali \& Tse scheme \cite{MATse2012} for secret communications. Moreover, compared with random binning the ANA approach offers lower system complexity via its linear operations. Therefore, the ANA approach is interesting from both theoretical and practical viewpoints.    

In this paper, we focus on one of the typical network models in information theory, namely, the wireless X network. As it has both the nature of broadcast channels and multiple access channels, the wireless X network has drawn substantial research attention for studying the efficient signaling for transmission. For instance, with perfect CSIT, the wireless X networks with different antenna settings have been studied in \cite{MaddahAli08,Cadambe2009,LYang2012} based on interference alignment schemes. With delayed CSIT, several results have been reported to generalize the Maddah-Ali \& Tse scheme to wireless X networks, e.g., \cite{Ghasemi2012,Tandon2012,Zhao2012GC}. It is also notable that with secrecy constraints, the $2 \times 2$ MIMO X channel has been studied in \cite{Zaidi2013} with output feedback and delayed CSIT, where an ANA scheme is proposed based on feedback. The contributions of our work are highlighted in the following section.   

\subsection{Contribution}

We study the SDOF of the wireless X network with confidential messages (XNCM). Specifically, the SDOF of the Gaussian $M \times K$ XNCM is investigated, in which each transmitter intends to send one confidential message to each receiver. The main results can be summarized as follows. 

\emph{1) A general sum SDOF upper bound for XNCM:} We bound the sum SDOF of the $M \times K$ XNCM to be less than or equal to $\frac{K(M-1)}{K+M-2}$ regardless of channel fading variations. Therefore, the proposed bound holds for time/frequency varying channels, and/or constant channels. To compare with the interference channel counterpart, we set $K=M$ to induce the upper bound $\frac{K}{2}$. Consequently, every transmitter can obtain \emph{at most} half of the resources.

\emph{2) The optimal sum SDOF of the time/frequency varying XNCM:} With perfect CSIT, the optimal sum SDOF of the $M \times K$ XNCM with time/frequency varying channels is shown to be
\begin{align} \nonumber
	d = \frac{K(M-1)}{K+M-2}, ~\textrm{if}~K=2,
\end{align}
and
\begin{align} \nonumber
	\frac{K(M-1)}{K+M-1} \leq d \leq \frac{K(M-1)}{K+M-2}, ~\textrm{if}~K \geq 3.
\end{align}
Therefore, the upper bound is tight for the networks with two receivers. The sum SDOF lower bound is achieved by an ANA approach, which combines standard interference alignment \cite{Cadambe2008,Cadambe2009} and artificial noise transmission. We note that the sum SDOF $\frac{K(M-1)}{K+M-1}$ overlaps with the results in \cite{Tiangao2008}. However, we prove it using a new approach: by proposing an ANA scheme we show that $\frac{K(M-1)}{K+M-1}$ can be achieved for the $M \times K$ XNCM with an external eavesdropper (EE), which implies the same achieved sum SDOF for the considered network without the EE. It is also notable that when $K=M$ and $K,M \geq 3$, our result overlaps with the optimal sum SDOF of the $K$-user interference channel with confidential messages \cite{Jianwei2013}. Even with an EE, treating the $X$ network as an interference channel can still obtain the best known SDOF so far \cite{Jianwei2013}. 

\emph{3) The achieved sum SDOF of the XNCM with reconfigurable antennas:} Following a similar principle, we generalize the ANA scheme into a blind approach with the help of reconfigurable antennas at the receivers. We note that the blind ANA scheme does not require any CSIT in the network. By a predefined private antenna switching pattern, we integrate the AN into a blind IA scheme to achieve the sum SDOF $\frac{K(M-1)}{K+M-1}$ for the $M \times K$ XNCM. It is worth noting that the predefined antenna switch pattern not only provides the channel coherence structure for aligning interference \cite{Tiangao2011}, it also serves as a secret key for different receivers to decode. 

The rest of the paper is organized as follows. Section \ref{section:SYSMD} introduces the system model for the $M \times K$ XNCM. We provide the proposed SDOF upper bound in Section \ref{section:SDOFConverse}. In Section \ref{section:SDOFXNCM}, we study the SDOF of the considered network with time/frequency varying channels, where the ANA scheme is proposed to achieve the sum SDOF lower bound. We generalize the ANA into a blind approach in Section \ref{section:SDOFBlind}. Conclusions are given in Section \ref{section:CONC}.

Notation: Throughout the paper, we use bold-faced uppercase letters, plain uppercase letters, and lowercase letters ($\mathbf{X}, X, x$) to represent matrices, vectors, and scalars, respectively, unless otherwise stated. $X^{n}$ represents the sequence $\{x_{1},x_{2},\dots,x_{n}\}$. We define $\mathcal{K} = \{1,2,\dots, K\}$, and $\mathcal{M} = \{1,2,\dots, M\}$, where specifically, $K,M$ are two integers. $\mathcal{K}-i$ denotes the set $\mathcal{K}$ after removing the element $i \in \mathcal{K}$. $\otimes$ denotes the \emph{Kronecker Product}. $\mathbf{A} \prec \mathbf{B}$ means that $\textrm{span}(\mathbf{A}) \subset \textrm{span}(\mathbf{B})$, where $\textrm{span}(\cdot)$ represents the column span of the matrix.

\begin{figure*}[t!]
\centerline{\epsfig{file=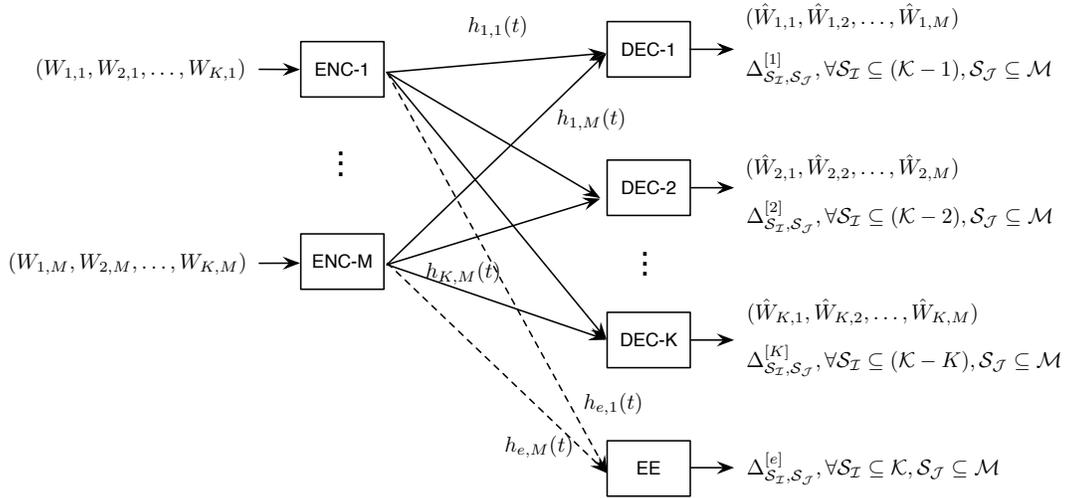,width=140mm}}
\caption{$M \times K$ XNCM with an external eavesdropper.}
\label{Fig:XNCSAS}
\end{figure*}

\section{System Model}
\label{section:SYSMD}
We mainly consider transmitting confidential messages in the wireless X network. Specifically, we provide the following definitions on the network. 

\begin{definition}\emph{$M \times K$ wireless X network with confidential messages (XNCM).}
\label{Def:XNCM}

Consider the $M \times K$ wireless X network, where each of the $M$ transmitters intends to deliver confidential messages to all $K$ receivers. Therefore, there are $MK$ confidential messages in the considered network, shown in Fig. \ref{Fig:XNCSAS}. The received signal for receiver $k$, at time $t$ is
\begin{equation} \nonumber
    y_{k}(t) = \sum_{m \in \mathcal{M}} h_{km}(t) x_{m}(t) + n_{k}(t), ~ k \in \mathcal{K},
\end{equation}
where the scalar $x_{m}(t)$ represents the transmitted signal of transmitter $m$, the scalar $h_{km}(t)$ represents the channel coefficient from transmitter $m$ to receiver $k$. To avoid channel degeneration, we assume the absolute value of channel coefficient is bounded away from zero and above by a finite value. The term $n_{k}(t) \sim \mathcal{CN}(0,1)$ is additive white Gaussian noise (AWGN). We assume the power constraint $\frac{1}{n}\sum_{t=1}^{n} |x_{m}(t)|^{2} \leq P$. We define $W_{k,m} \in \mathcal{W}_{k,m} = [1:2^{nR_{k,m}}]$, $k \in \mathcal{K}$ and $m \in \mathcal{M}$, denoting the confidential message from transmitter $m$ to receiver $k$ with the secrecy rate $R_{k,m}$. Let $\mathcal{W} = \{\mathcal{W}_{k,m}\}_{k\in \mathcal{K}, m \in \mathcal{M}}$. A secrecy rate tuple $\mathbf{R} = \{R_{k,m}\}_{k \in \mathcal{K}, m \in \mathcal{M}}$ is achievable if there exists a secret codebook $(n, \mathbf{R}, \mathcal{W})$ to satisfy the reliability and confidentiality constraints simultaneously:
\begin{itemize}
    \item {reliability: at receiver $k$,
        \begin{equation} \label{Eqn:reliability}
            \lim_{n \rightarrow \infty} \textrm{Pr}(\hat{W}_{k,m} \neq W_{k,m}) = 0, ~ \forall m \in \mathcal{M}
        \end{equation} where $\hat{W}_{k,m}$ is the estimation of the codeword $W_{k,m}$, and}
    \item {confidentiality: at receiver $k$,
    		\begin{equation}
    			\nonumber
    			\lim_{n,P\rightarrow \infty}\Delta_{\mathcal{S_{I}},\mathcal{S_{J}}}^{[k]} = 1,
    		\end{equation}
    where the equivocation for each subset of messages $\mathbf{W}_{\mathcal{S_{I}},\mathcal{S_{J}}}$ at receiver $k$ is defined as 
        \begin{equation} \label{Eqn:confidentiality}
            \Delta_{\mathcal{S_{I}},\mathcal{S_{J}}}^{[k]} \overset{\triangle}{=} \frac{H(\mathbf{W}_{\mathcal{S_{I}},\mathcal{S_{J}}}|Y_{k}^{n})}{H(\mathbf{W}_{\mathcal{S_{I}},\mathcal{S_{J}}})},
        \end{equation} where $\mathbf{W}_{\mathcal{S_{I}}, \mathcal{S_{J}}} = \{W_{ij} : i \in \mathcal{S_{I}}, j \in \mathcal{S_{J}}\}$ for all $\mathcal{S_{I}} \subseteq \mathcal{K}-k$, $\mathcal{S_{J}} \subseteq \mathcal{M}$ such that $H(\mathbf{W}_{\mathcal{S_{I}}, \mathcal{S_{J}}}) > 0$. Note that the equivocation (\ref{Eqn:confidentiality}) is only defined on the non-zero rate message set. For the zero rate message set, the confidentiality is automatically satisfied because the information leakage can only be zero.}
\end{itemize}
If for a certain power $P$, $\{R_{k,m}\}$ are achievable, the sum SDOF $d$ is said to be achieved with definition
\begin{equation} \label{Eqn:SDOFXC}
    d = \lim_{P \rightarrow \infty} \frac{\sum_{k \in \mathcal{K}, m \in \mathcal{M}}R_{k,m}}{\textrm{log}(P)}.
\end{equation}
We say $d$ is the optimal sum SDOF if it is the maximum value of all achievable sum SDOFs. We consider only the sum SDOF in this paper. 
\end{definition}

In Section \ref{section:SDOFXNCM}, we also consider the network when there exists an external eavesdropper. Inherited from the above definition on XNCM, we present that network as follows. 
\begin{definition}\emph{The XNCM with an external eavesdropper (XNCM-EE).}

Consider the $M \times K$ XNCM, when an external eavesdropper $e$ appears in the network. The received signal at the eavesdropper $e$, at time $t$ is 
\begin{equation} \nonumber
	y_{e}(t) = \sum_{m \in \mathcal{M}} h_{em}(t) x_{m}(t) + n_{e}(t), ~ k \in \mathcal{K}. 
\end{equation}
Following \emph{Definition} \ref{Def:XNCM}, we say a secrecy rate tuple $\mathbf{R}$ is achievable for a transmission power $P$ if there exists a secret codebook $(n, \mathbf{R}, \mathcal{W})$ to satisfy reliability (\ref{Eqn:reliability}), confidentiality (\ref{Eqn:confidentiality}) and also an extra secrecy constraint at the eavesdropper:
\begin{equation} \label{Eqn:secEva}
	\lim_{n,P \rightarrow\infty}\Delta_{\mathcal{S_{I}},\mathcal{S_{J}}}^{[e]} \overset{\triangle}{=} \lim_{n,P \rightarrow \infty}\frac{H(\mathbf{W}_{\mathcal{S_{I}},\mathcal{S_{J}}}|Y_{e}^{n})}{H(\mathbf{W}_{\mathcal{S_{I}},\mathcal{S_{J}}})} = 1,
\end{equation} for $\mathbf{W}_{\mathcal{S_{I}}, \mathcal{S_{J}}} = \{W_{ij} : i \in \mathcal{S_{I}}, j \in \mathcal{S_{J}}\}$ and $H(\mathbf{W}_{\mathcal{S_{I}}, \mathcal{S_{J}}}) > 0$, for all $\mathcal{S_{I}} \subseteq \mathcal{K}$, $\mathcal{S_{J}} \subseteq \mathcal{M}$. 
\end{definition}

\section{Secure Degrees of Freedom Upper Bound}
\label{section:SDOFConverse}
In this section, we derive an SDOF upper bound for the $M \times K$ XNCM regardless of channel fading variations. We first present the following lemma, which will be used as an important intermediate step for the proof of the SDOF upper bound.

\begin{lemma}[Role of a Helper in X Networks]
\label{lemma:MacBound}
For any $\hat{k} \in \mathcal{K}$ and $p \in \mathcal{M}$, we have the following bound for the secrecy rate:
\begin{align} \label{Eqn:MacBound}
    n \sum_{j \in \mathcal{M}-p} R_{\hat{k} j} + h(X_{p}^{n} + \tilde{N}^{n}) \leq h(Y_{\hat{k}}^{n}) + nO(1),
\end{align}	
where the lowercase letter $h$ represents the differential entropy, and $\tilde{N}^{n}$ is an $n \times 1$ vector with independent Gaussian entries $\tilde{n}(t)$ ($t=[1:n]$) with respective variances less than $\frac{1}{|h_{\hat{k}p}(t)|^{2}}$.
\end{lemma}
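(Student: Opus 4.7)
The plan is to use Fano's inequality at receiver $\hat{k}$ to replace the sum rate of interest by the output entropy minus a conditional output entropy, and then to lower-bound that conditional entropy by $h(X_{p}^{n}+\tilde{N}^{n})$ up to an $nO(1)$ channel-gain term. To set things up, I would invoke the reliability constraint (\ref{Eqn:reliability}) for the messages $\{W_{\hat{k},j}\}_{j\in\mathcal{M}-p}$, which are intended for receiver $\hat{k}$ and are mutually independent. Fano's inequality combined with $I(W;Y)=h(Y)-h(Y\mid W)$ then gives
\begin{align}
n\sum_{j\in\mathcal{M}-p} R_{\hat{k} j} \le h(Y_{\hat{k}}^{n})-h(Y_{\hat{k}}^{n}\mid W_{\hat{k},\mathcal{M}-p})+n\epsilon_{n}, \nonumber
\end{align}
with $\epsilon_{n}\to 0$, so it suffices to show that $h(Y_{\hat{k}}^{n}\mid W_{\hat{k},\mathcal{M}-p})\ge h(X_{p}^{n}+\tilde{N}^{n})-nO(1)$.

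For this lower bound, I would enlarge the conditioning by further including $X_{\mathcal{M}-p}^{n}$, which can only decrease the entropy. Because each transmitter uses only its own messages and its own private randomization, and these are independent across transmitters, $X_{p}^{n}$ is independent of $(W_{\hat{k},\mathcal{M}-p},X_{\mathcal{M}-p}^{n})$; subtracting the now-known contributions of the $\mathcal{M}-p$ transmitters from $Y_{\hat{k}}^{n}$ leaves precisely $h_{\hat{k}p}(t)x_{p}(t)+n_{\hat{k}}(t)$ at each time $t$. Writing $\mathbf{H}_{p}=\textrm{diag}(h_{\hat{k}p}(1),\dots,h_{\hat{k}p}(n))$, this yields $h(Y_{\hat{k}}^{n}\mid W_{\hat{k},\mathcal{M}-p})\ge h(\mathbf{H}_{p} X_{p}^{n}+N_{\hat{k}}^{n})$.

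The last step is to apply the scaling identity $h(\mathbf{H}_{p} X_{p}^{n}+N_{\hat{k}}^{n})=\sum_{t=1}^{n}\log|h_{\hat{k}p}(t)|+h(X_{p}^{n}+\mathbf{H}_{p}^{-1}N_{\hat{k}}^{n})$ and then compare $\mathbf{H}_{p}^{-1}N_{\hat{k}}^{n}$ with $\tilde{N}^{n}$. Since by hypothesis each entry of $\tilde{N}^{n}$ has variance strictly less than $1/|h_{\hat{k}p}(t)|^{2}$, I can decompose $\mathbf{H}_{p}^{-1}N_{\hat{k}}^{n}$ (in distribution) as $\tilde{N}^{n}+{N'}^{n}$ with an independent Gaussian remainder ${N'}^{n}$, which gives $h(X_{p}^{n}+\mathbf{H}_{p}^{-1}N_{\hat{k}}^{n})\ge h(X_{p}^{n}+\tilde{N}^{n})$. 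The residual $\sum_{t}\log|h_{\hat{k}p}(t)|$ is $nO(1)$ thanks to the assumed boundedness of the channel coefficients, and rearranging yields (\ref{Eqn:MacBound}).

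The main obstacle I expect is making this last noise-comparison step rigorous across the $n$ time steps of a time-varying channel: because $\mathbf{H}_{p}^{-1}N_{\hat{k}}^{n}$ and $\tilde{N}^{n}$ carry heterogeneous per-component variances, the ``adding Gaussian noise only increases differential entropy'' argument must be applied componentwise. The uniform upper and lower bounds on $|h_{\hat{k}p}(t)|$ are exactly what ensure both that the decomposition into $\tilde{N}^{n}+{N'}^{n}$ is valid at every coordinate and that the residual $\sum_{t}\log|h_{\hat{k}p}(t)|$ stays within the $nO(1)$ precision demanded by the statement.
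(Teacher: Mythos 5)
Your proposal is correct and follows essentially the same route as the paper's Appendix~A proof: Fano's inequality at receiver $\hat{k}$, reduction of the conditional output entropy to $h(\mathbf{H}_{\hat{k}p}X_{p}^{n}+N_{\hat{k}}^{n})$ via independence across transmitters, and a Gaussian-divisibility decomposition of the receiver noise (the paper writes $N_{\hat{k}}^{n}=\mathbf{H}_{\hat{k}p}\tilde{N}^{n}+\tilde{\tilde{N}}^{n}$ and conditions on $\tilde{\tilde{N}}^{n}$, which is the same step as your ``adding independent Gaussian noise only increases entropy'' after descaling). The only cosmetic difference is the order in which you pull out the channel-gain scaling, which is harmless given the assumed bounds on $|h_{\hat{k}p}(t)|$.
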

\begin{proof}
	The proof follows the same line as that of the \emph{Role of a Helper} Lemma in \cite{Jianwei2012}, with the details in Appendix \ref{Appendix:Proof:LemmaMac}.
\end{proof}
\emph{Lemma} \ref{lemma:MacBound} states that in order to decode the messages $\mathbf{W}_{\hat{k},\mathcal{M}-p}$ at receiver $\hat{k}$, the differential entropy of transmitted signal of the transmitter $p$ should be upper bounded by the difference of the differential entropy at the receiver $\hat{k}$ and the sum rate of $\mathbf{W}_{\hat{k},\mathcal{M}-p}$. Now we are ready to present the SDOF upper bound for the $M \times K$ XNCM. 

\begin{theorem} \label{theorem:SDOFUB}
The optimal sum SDOF of the $M \times K$ XNCM is upper bounded as $d \leq \frac{K(M-1)}{K+M-2}$.
\end{theorem}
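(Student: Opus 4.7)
The plan is to apply \emph{Lemma}~\ref{lemma:MacBound} systematically across all pairs $(\hat k,p)\in\mathcal K\times\mathcal M$ and combine the resulting inequalities with the joint confidentiality constraint \eqref{Eqn:confidentiality} and the trivial power bound on the received signals.

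First, I would sum the $MK$ inequalities furnished by \emph{Lemma}~\ref{lemma:MacBound}. Straightforward counting---each rate $R_{\hat k,j}$ is picked up $M-1$ times (once for every $p\neq j$), each $h(X_p^n+\tilde N^n)$ is picked up $K$ times (once for each receiver $\hat k$), and each $h(Y_{\hat k}^n)$ is picked up $M$ times (once for each helper $p$)---yields
\[
n(M-1)R_s+K\sum_{p\in\mathcal M}h(X_p^n+\tilde N^n)\le M\sum_{\hat k\in\mathcal K}h(Y_{\hat k}^n)+nO(1),
\]
where $R_s=\sum_{\hat k,j}R_{\hat k,j}$ is the sum secrecy rate.

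Second, I would bound the right-hand side using the power constraint $h(Y_{\hat k}^n)\le n\log P+nO(1)$ per receiver, hence $\sum_{\hat k}h(Y_{\hat k}^n)\le nK\log P+nO(1)$. Substituting into the previous display gives
\[
(M-1)nR_s+K\sum_p h(X_p^n+\tilde N^n)\le MK\,n\log P+nO(1).
\]

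Third, and this is the crux, I would establish a matching ``secrecy-penalty'' lower bound on $\sum_p h(X_p^n+\tilde N^n)$---specifically $\sum_p h(X_p^n+\tilde N^n)\ge \frac{MK-1}{K(M-1)}\,nR_s+o(n\log P)$---reflecting that each transmit signal $X_p^n$ must simultaneously carry $K$ confidential messages and keep each of them hidden from the $K-1$ unintended receivers, so its differential entropy must grow in proportion to $R_s$ with a specific coefficient. Such a bound can be obtained from the joint confidentiality constraint \eqref{Eqn:confidentiality} combined with standard chain-rule and entropy manipulations (in the spirit of the secrecy-penalty argument used in \cite{Jianwei2013} for the $K$-user ICCM). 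Substituting this lower bound into the previous inequality yields $\bigl[(M-1)+(MK-1)/(M-1)\bigr]nR_s\le MK\,n\log P+o(n\log P)$, which simplifies to $(K+M-2)nR_s\le K(M-1)n\log P+o(n\log P)$. Dividing by $n\log P$ and taking $n,P\to\infty$ in \eqref{Eqn:SDOFXC} delivers $d\le K(M-1)/(K+M-2)$.

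The main obstacle is the third step: \emph{Lemma}~\ref{lemma:MacBound} itself only supplies \emph{upper} bounds on $h(X_p^n+\tilde N^n)$ (as the difference $h(Y_{\hat k}^n)$ minus a rate sum), so the matching \emph{lower} bound with precisely the right coefficient $(MK-1)/(K(M-1))$ must come from a distinct secrecy-penalty argument that couples across all $MK$ confidential messages in the X network rather than from a straightforward reapplication of \emph{Lemma}~\ref{lemma:MacBound}.
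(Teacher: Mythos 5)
Your first two steps are sound and the final arithmetic is consistent, but the proposal has a genuine gap at exactly the point you flag as the crux: the lower bound $\sum_p h(X_p^n+\tilde N^n)\ge \frac{MK-1}{K(M-1)}\,nR_s+o(n\log P)$ is asserted with a coefficient reverse-engineered to make the answer come out, and the route you suggest for proving it does not deliver that coefficient. The standard secrecy-penalty argument --- Fano at the $K-1$ unintended receivers, the confidentiality constraint \eqref{Eqn:confidentiality} at receiver $\hat k$, and the introduction of the noisy inputs $\tilde{\mathbf X}_{\mathcal M}$ --- yields $n\sum_{i\in\mathcal K-\hat k,\,j\in\mathcal M}R_{ij}\le \sum_p h(X_p^n+\tilde N^n)-h(Y_{\hat k}^n)+no(\log P)$; combined with the reliability bound $h(Y_{\hat k}^n)\ge n\sum_j R_{\hat k j}+nO(1)$, this gives only $\sum_p h(X_p^n+\tilde N^n)\ge nR_s+o(n\log P)$, i.e.\ coefficient $1$. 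Since $\frac{MK-1}{K(M-1)}>1$ for $K\ge 2$, coefficient $1$ is strictly insufficient: plugged into your second step it gives $(M+K-1)R_s\le MK\log P$, i.e.\ $d\le\frac{MK}{M+K-1}$, which is merely the non-secret X-network DOF bound and is weaker than the theorem (e.g.\ $4/3$ instead of $1$ for $M=K=2$).

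The missing coefficient can in fact be recovered, but only by doing the opposite of what you prescribe: you must feed Lemma \ref{lemma:MacBound} back in, read in the reverse direction as a lower bound $h(Y_{\hat k}^n)\ge h(X_p^n+\tilde N^n)+n\sum_{j\in\mathcal M-p}R_{\hat k j}-nO(1)$, inside the secrecy-penalty inequality at each \emph{fixed} receiver $\hat k$ and before any symmetric summation over $(\hat k,p)$; averaging over $p$ and then over $\hat k$ does produce exactly $\frac{MK-1}{K(M-1)}\,nR_s$. At that point, however, you have reconstructed the paper's argument, which keeps $\hat k$ fixed, cancels $h(\mathbf Y_{\hat k})$ between the secrecy penalty and the $M$ helper bounds to leave $(M-1)h(\mathbf Y_{\hat k})$, and only then sums over $\hat k$. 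So the step-three inequality is not a ``distinct'' secrecy-penalty fact obtainable without ``reapplication of Lemma \ref{lemma:MacBound}''; it is precisely that reapplication. Separately, you never derive the secrecy-penalty chain itself (the passage from Fano and \eqref{Eqn:confidentiality} to $h(\tilde{\mathbf X}_{\mathcal M})-h(\mathbf Y_{\hat k})$ via the noisy inputs), which is the other half of the proof and cannot be omitted.
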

\begin{proof}
The detailed proof follows similarly to the SDOF upper bound proof for the interference channel in \cite{Jianwei2013}. For brevity, we define the following notation. Let $\mathbf{W}_{\mathcal{I},\mathcal{J}} = \{W_{i,j} | i \in \mathcal{I}, j \in \mathcal{J}\}$, with two finite sets $\mathcal{I} = \{1,2,\dots, I\}$ and $\mathcal{J} = \{1,2,\dots,J\}$. Let $\mathbf{X}_{\mathcal{J}} = \{X_{j}^{n} | j \in \mathcal{J}\}$, $\mathbf{Y}_{\mathcal{I}} = \{Y_{i}^{n} | i \in \mathcal{I}\}$, and $\mathbf{N}_{\mathcal{I}} = \{N^{n}_{i} | i \in \mathcal{I}\}$ denote the transmitted, received signal and the noise sequences in the set $\mathcal{J}$ and $\mathcal{I}$, respectively. Consider all messages in the network that are confidential for receiver $\hat{k}$. Starting from Fano's inequality, we have
\begin{align} \nonumber
    & ~~~n \sum_{i \in \mathcal{K}-\hat{k}, j \in \mathcal{M}} R_{ij} = H(\mathbf{W}_{\mathcal{K}-\hat{k}, \mathcal{M}}) \\ \nonumber 
    		& = I(\mathbf{W}_{\mathcal{K}-\hat{k}, \mathcal{M}}; \mathbf{Y}_{\mathcal{K}-\hat{k}}) + H(\mathbf{W}_{\mathcal{K}-\hat{k}, \mathcal{M}} | \mathbf{Y}_{\mathcal{K}-\hat{k}}) \\ \nonumber
        & \leq I(\mathbf{W}_{\mathcal{K}-\hat{k}, \mathcal{M}}; \mathbf{Y}_{\mathcal{K}-\hat{k}}) + n\epsilon_{1} \\ \nonumber
        & \leq I(\mathbf{W}_{\mathcal{K}-\hat{k}, \mathcal{M}}; \mathbf{Y}_{\mathcal{K}-\hat{k}}) - I(\mathbf{W}_{\mathcal{K}-\hat{k}, \mathcal{M}}; \mathbf{Y}_{\hat{k}}) \\ \label{Eqn:SecConstr}
        & ~~~~ + n(\epsilon_{1} + \epsilon_{2}) \\ \nonumber
        & \leq I(\mathbf{W}_{\mathcal{K}-\hat{k}, \mathcal{M}}; \mathbf{Y}_{\mathcal{K}}) - I(\mathbf{W}_{\mathcal{K}-\hat{k}, \mathcal{M}}; \mathbf{Y}_{\hat{k}}) + n\epsilon_{3} \\ \nonumber
        & = I(\mathbf{W}_{\mathcal{K}-\hat{k}, \mathcal{M}}; \mathbf{Y}_{\mathcal{K}-\hat{k}} | \mathbf{Y}_{\hat{k}}) + n\epsilon_{3} \\ \nonumber
        & = h(\mathbf{Y}_{\mathcal{K}-\hat{k}} | \mathbf{Y}_{\hat{k}}) - h(\mathbf{Y}_{\mathcal{K}-\hat{k}} | \mathbf{Y}_{\hat{k}}, \mathbf{W}_{\mathcal{K}-\hat{k},\mathcal{M}}) + n\epsilon_{3} \\ \label{Eqn:Ub1}
        & \leq h(\mathbf{Y}_{\mathcal{K}-\hat{k}} | \mathbf{Y}_{\hat{k}}) + n\epsilon_{3} \\ \nonumber 
        & = h(\mathbf{Y}_{\mathcal{K}}) - h(\mathbf{Y}_{\hat{k}}) + n\epsilon_{3} \\ 
        \label{Eqn:IntroX}
        & = h(\tilde{\mathbf{X}}_{\mathcal{M}},\mathbf{Y}_{\mathcal{K}}) - h(\tilde{\mathbf{X}}_{\mathcal{M}}|\mathbf{Y}_{\mathcal{K}}) - h(\mathbf{Y}_{\hat{k}}) + n\epsilon_{3} \\ \nonumber
        & \leq h(\tilde{\mathbf{X}}_{\mathcal{M}}) + h(\mathbf{Y}_{\mathcal{K}}|\tilde{\mathbf{X}}_{\mathcal{M}}) - h(\tilde{\mathbf{X}}_{\mathcal{M}}|\mathbf{Y}_{\mathcal{K}}, \mathbf{X}_{\mathcal{M}}) - h(\mathbf{Y}_{\hat{k}}) + n\epsilon_{3} \\ 
        \label{Eqn:Ub3}
        & \leq h(\tilde{\mathbf{X}}_{\mathcal{M}}) - h(\mathbf{Y}_{\hat{k}}) + no(\textrm{log}(P))
\end{align}        
for some $\epsilon_{l} > 0$ and $\epsilon_{l} = o(\textrm{log}(P))$ ($l = 1,2,3$), where $\epsilon_{3} = \epsilon_{1} + \epsilon_{2}$.
\begin{itemize}
	\item (\ref{Eqn:SecConstr}) follows from the secrecy constraint
		\begin{align}\nonumber
		I(\mathbf{W}_{\mathcal{K}-\hat{k}, \mathcal{M}}; \mathbf{Y}_{\hat{k}}) \leq n\epsilon_{2},~		\textrm{for some}~ \epsilon_{2} = o(\textrm{log}(P)).
		\end{align}
	\item (\ref{Eqn:Ub1}) follows from
		\begin{align} \nonumber
    & ~~h(\mathbf{Y}_{\mathcal{K}-\hat{k}} | \mathbf{Y}_{\hat{k}}, \mathbf{W}_{\mathcal{K}-\hat{k},\mathcal{M}}) \\ \nonumber
    & \geq h(\mathbf{Y}_{\mathcal{K}-\hat{k}} | \mathbf{Y}_{\hat{k}}, \mathbf{W}_{\mathcal{K}-\hat{k},\mathcal{M}}, \mathbf{W}_{\hat{k},\mathcal{M}}, \mathbf{X}_{\mathcal{M}}) \\ \nonumber 
    & = h(\mathbf{N}_{\mathcal{K}-\hat{k}}) \geq 0,
		\end{align}
where the last inequality follows from the fact that the Gaussian distribution with unit variance has non-negative differential entropy.
	\item (\ref{Eqn:IntroX}) follows by defining $\tilde{X}^{n}_{i} = X^{n}_{i} + \tilde{N}_{i}^{n}$ for $i \in \mathcal{M}$, where $\tilde{N}_{i}^{n}$ is an $n \times 1$ vector with independent Gaussian entries $\tilde{n}_{i}(t)$ ($t=[1:n]$) with respective variances smaller than $\frac{1}{|h_{\hat{k}i}(t)|^{2}}$. 
	\item (\ref{Eqn:Ub3}) follows from the fact that 
	\begin{align}
	\nonumber 
	& ~~~|h(\mathbf{Y}_{\mathcal{K}}|\tilde{\mathbf{X}}_{\mathcal{M}}) - h(\tilde{\mathbf{X}}_{\mathcal{M}}|\mathbf{Y}_{\hat{k}}, \mathbf{X}_{\mathcal{M}})| \\ \nonumber
	& \leq |h(\mathbf{Y}_{\mathcal{K}}|\tilde{\mathbf{X}}_{\mathcal{M}})| \!+\! |h(\tilde{\mathbf{X}}_{\mathcal{M}}|\mathbf{Y}_{\hat{k}}, \mathbf{X}_{\mathcal{M}})| \leq n\delta,
	\end{align}
	for some $\delta = o(\textrm{log}(P))$, because only effective noise contributes to the differential entropy terms by conditioning on the signals $\mathbf{X}_{\mathcal{M}}$ and $\tilde{\mathbf{X}}_{\mathcal{M}}$, and the channel coefficients have bounded absolute values. 
\end{itemize}

So far, we have bounded the sum rate of all confidential messages that are not intended for receiver $\hat{k}$ as (\ref{Eqn:Ub3}). In order to bound the sum rate for the whole message set $\mathbf{W}_{\mathcal{K},\mathcal{M}}$, we would like to apply \emph{Lemma} \ref{lemma:MacBound} in the proof. We begin with (\ref{Eqn:Ub3}) as follows:
\begin{align} \nonumber
    & ~~~ n \sum_{i \in \mathcal{K}-\hat{k},j \in \mathcal{M}} R_{ij} \\ \nonumber
    & \leq h(\tilde{\mathbf{X}}_{\mathcal{M}}) - h(\mathbf{Y}_{\hat{k}}) + no(\textrm{log}(P)) \\ \nonumber
	& \leq \sum_{p\in \mathcal{M}} h(\tilde{\mathbf{X}}_{p}) - h(\mathbf{Y}_{\hat{k}}) + no(\textrm{log}(P)) \\
	\label{Eqn:SubMacArg}
    & \leq \sum_{p\in \mathcal{M}} \left(h(\mathbf{Y}_{\hat{k}}) - n\sum_{j \in \mathcal{M}-p} R_{\hat{k}j} \right) - h(\mathbf{Y}_{\hat{k}}) + no(\textrm{log}(P)),
\end{align}
where (\ref{Eqn:SubMacArg}) follows by substituting (\ref{Eqn:MacBound}) of \emph{Lemma} \ref{lemma:MacBound}. Equivalently, we have
\begin{align} \nonumber
	& n\sum_{i \in \mathcal{K}-\hat{k},j \in \mathcal{M}} R_{ij} + n\sum_{p \in \mathcal{M}} \sum_{j \in \mathcal{M}-p} R_{\hat{k}j} \\ \nonumber
	& ~~~\leq (M-1) h(\mathbf{Y}_{\hat{k}}) + no(\textrm{log}(P)).
\end{align}
Manipulating the indices on the left hand side of the above inequality, we have
\begin{align} \nonumber
	& n \sum_{i \in \mathcal{K}, j \in \mathcal{M}} R_{ij} + n(M-2) \sum_{j \in \mathcal{M}}R_{\hat{k}j} \\ \nonumber
	& ~~~ \leq (M-1)h(\mathbf{Y}_{\hat{k}}) + no(\textrm{log}(P)).
\end{align}
Considering that $h(\mathbf{Y}_{\hat{k}}) \leq n \textrm{log}(P) + nO(1)$, we sum up the above inequality for all $\hat{k} \in \mathcal{K}$ to obtain:
\begin{align} \nonumber
	(K+M-2) \sum_{i\in \mathcal{K}, j\in \mathcal{M}}R_{ij} \leq K(M-1)\textrm{log}(P) + o(\textrm{log}(P)).
\end{align}
Therefore, we have the SDOF upper bound as follows to conclude the proof
\begin{equation} \nonumber
    d = \sum_{i \in \mathcal{K}, j \in \mathcal{M}} d_{ij} \leq \frac{K(M-1)}{M+K-2}.
\end{equation} 	
\end{proof}

\begin{remark} \rm
We note that the derived SDOF upper bound does not make any assumptions regarding channel fading variations. Therefore, \emph{Theorem} \ref{theorem:SDOFUB} provides a general SDOF upper bound for the $M \times K$ XNCM with constant channels, or with time/frequency varying channels. We also note that the derived bound naturally serves as an upper bound for the SDOF of the considered network with no CSIT. 
\end{remark}

\begin{remark}[At most half of the cake] \rm
We observe that the derived SDOF upper bound for the $M \times K$ XNCM equals to the sum degrees of freedom (DOF) of the $(M-1) \times K$ X network without secrecy constraints. Therefore, the impact of confidentiality is equivalent to removing at least one sender from the wireless X network, in terms of DOF. It is also interesting to note that when $K=M$, the above upper bound yields $\sum_{i,j \in \mathcal{K}} d_{ij} \leq \frac{K(K-1)}{2(K-1)} = \frac{K}{2}$, which coincides with the sum DOF for the $K$-user interference channel without secrecy constraints. Therefore, for the $K \times K$ fully connected wireless $X$ network, if all the messages existing in the network are confidential, then every sender can at most obtain half of the resources.
\end{remark}

\section{Secure Degrees of Freedom of the $M \times K$ XNCM with Time/Frequency-Varying Channels}
\label{section:SDOFXNCM}
In this section, we study the SDOF of wireless X networks with time or frequency varying channels. We show that the proposed SDOF upper bound is tight when $K = 2$ by presenting an artificial noise alignment scheme. The achieved SDOF of XNCM with an external eavesdropper 
will also be studied in this section. We start with the following theorem.

\begin{figure*}[t!]
\centerline{\epsfig{file=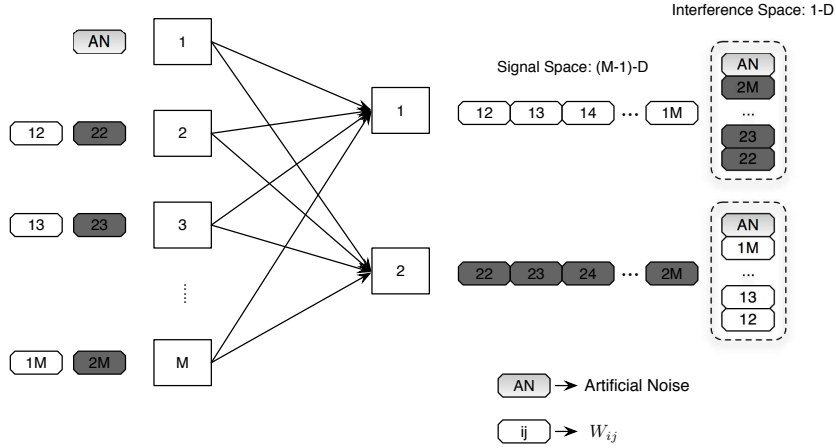,width=110mm}}
\caption{Artificial noise alignment (ANA) for $M \times 2$ XNCM.}
\label{Fig:XNC2ANA}
\end{figure*}

\begin{theorem}
\label{theorem:SDOFXNCM}
The optimal sum SDOF of the $M \times K$ XNCM with time/frequency-varying channels is 
\begin{align} \nonumber
	d = \frac{K(M-1)}{K+M-2}, ~\textrm{if}~K=2,
\end{align}
and
\begin{align} \nonumber
	\frac{K(M-1)}{K+M-1} \leq d \leq \frac{K(M-1)}{K+M-2}, ~\textrm{if}~K \geq 3.
\end{align}
\end{theorem}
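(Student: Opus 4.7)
The upper bound in both regimes is immediate from Theorem~\ref{theorem:SDOFUB}, so the entire task is to exhibit coding schemes achieving the claimed lower bounds. The plan, in both regimes, is to construct an artificial noise alignment (ANA) scheme that marries Cadambe--Jafar style interference alignment for the X network with carefully chosen Gaussian artificial noise whose dimensions are indistinguishable from those of the undesired messages at every unintended receiver.

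For $K=2$, I would work over an $M$-symbol block extension of the time-varying channel. Each transmitter $m$ sends a linear combination of precoded streams for receiver $1$, precoded streams for receiver $2$, and one Gaussian artificial noise stream $z_m$. The precoders would be chosen so that, at receiver $1$, all streams carrying messages for receiver $2$ together with the $M$ AN streams $\{z_m\}_{m \in \mathcal{M}}$ collapse into a single one-dimensional interference subspace, leaving $M-1$ dimensions free for the desired signal; the symmetric statement holds at receiver $2$. A dimension count then yields $2(M-1)$ confidential streams per $M$ channel uses, matching the claimed SDOF $\frac{2(M-1)}{M}$. Confidentiality follows because at each receiver the aligned interference subspace contains the full collection of eavesdropped messages \emph{plus} AN whose power scales as $P$, so by a standard Gaussian leakage argument the conditional differential entropy of the unintended messages equals their unconditional entropy up to $o(\log P)$, forcing every admissible equivocation ratio $\Delta_{\mathcal{S_I},\mathcal{S_J}}^{[k]}$ to $1$.

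For $K \geq 3$ I would adapt the CJ asymptotic X-network scheme to accommodate AN: over a symbol extension whose length is polynomial in an integer $n$ with exponent related to $(M-1)(K-1)$, each transmitter emits both message beams and AN beams along directions drawn from channel-dependent monomial subspaces. The construction is arranged so that at every legitimate receiver the undesired messages \emph{together with} the AN align into a common interference subspace, while the desired messages span a complementary signal subspace of dimension large enough to yield sum SDOF approaching $\frac{K(M-1)}{K+M-1}$ as $n \to \infty$. The external-eavesdropper variant requires no extra rate loss: arranging the AN to span the EE's observation subspace is automatic under generic channel conditions once AN occupies each legitimate receiver's interference subspace, and this drives the EE's equivocation to $1$ as well, which in particular implies the same achievable sum SDOF for the network without the EE.

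The principal obstacle, particularly for $K \geq 3$, is the simultaneous enforcement of three coupled constraints by a single precoder design: decodability at each legitimate receiver through linearly independent effective desired beams; joint alignment of every transmitter's undesired-message beams with AN at every non-intended receiver (and at the EE); and a uniform information-leakage bound $I(\mathbf{W}_{\mathcal{S_I},\mathcal{S_J}};Y_k^n) = o(n\log P)$ that holds for every admissible subset of confidential messages, as required by the definition of $\Delta_{\mathcal{S_I},\mathcal{S_J}}^{[k]}$ in (\ref{Eqn:confidentiality}). Reducing the third constraint to a statement about how many high-entropy AN dimensions the interference subspace carries, and then matching this count against the second constraint via a CJ-type dimension analysis for X networks, is where the real work lies.
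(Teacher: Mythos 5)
Your converse and the reduction of the $K \geq 3$ lower bound to the external-eavesdropper network are fine, but the $K=2$ scheme as you describe it is infeasible, and the infeasibility is structural rather than a matter of detail. You have every transmitter $m$ emit one AN stream $z_m$ and require all $M$ AN beams to collapse into a single one-dimensional interference subspace at receiver $1$ \emph{and} (symmetrically) at receiver $2$. Writing the AN beamformers as $\phi_m$, the condition at receiver $1$ forces $\phi_m = c_m \mathbf{H}_{1m}^{-1}v_1$ for a common direction $v_1$; substituting into the condition at receiver $2$ requires $v_1$ to be a common eigenvector of the generic diagonal matrices $\mathbf{H}_{2m'}^{-1}\mathbf{H}_{1m'}\mathbf{H}_{2m}\mathbf{H}_{1m}^{-1}$, hence almost surely a standard basis vector $e_i$. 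Then $\phi_m \propto e_i$ for all $m$, the aligned interference direction at both receivers is $e_i$, every undesired message beam must also be proportional to $e_i$, and consequently the desired beams at the other receiver all collapse into $\mathrm{span}(e_i)$: the signal space degenerates to one dimension and cannot carry $M-1$ streams (nor be separated from the interference even when $M=2$). There is also a counting tension in your allocation: if all $M$ transmitters carry streams for receiver $1$, you need at least $M$ linearly independent desired beams in an $(M-1)$-dimensional signal space.

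The paper avoids both problems by dedicating exactly one transmitter as a pure helper: transmitter $1$ sends a \emph{single} generic AN beam $\boldsymbol{\Phi}^{[1]}$ and zero messages ($R_{k,1}=0$), while each of the remaining $M-1$ transmitters sends one stream per receiver. Each undesired message beam is then aligned onto the helper's AN direction at its unintended receiver only, via $\mathbf{H}_{1j}\boldsymbol{\Phi}^{[2j]} = \mathbf{H}_{11}\boldsymbol{\Phi}^{[1]}$ and $\mathbf{H}_{2j}\boldsymbol{\Phi}^{[1j]} = \mathbf{H}_{21}\boldsymbol{\Phi}^{[1]}$; since the lone AN beam trivially ``aligns with itself'' at both receivers, $\boldsymbol{\Phi}^{[1]}$ stays free and generic, which is what makes the effective channel matrices full rank. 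This is also why the achievable numerator is $K(M-1)$ rather than $KM$. For $K \geq 3$ your sketch is directionally consistent with the paper, but note that the paper again concentrates \emph{all} AN at the single helper (one $(n+1)^{\Gamma}$-dimensional AN block per receiver plus the eavesdropper, with subspace-inclusion rather than exact alignment), and explicitly observes that naive extensions that spread AN differently violate the independence of signal and interference spaces; a proposal in which every transmitter emits AN beams would need to confront the same over-constraint exhibited above.
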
 

\begin{proof}
The converse follows directly from \emph{Theorem} \ref{theorem:SDOFUB}. For achievability, we start from the case $K = 2$. We shall show that SDOF of $\frac{2(M-1)}{M}$ can be achieved. In the following, we propose an ANA approach, which essentially aligns the artificial noise to the interference space at the receivers. The details are presented as follows.

Considering an $M$ symbol extension over the original channel, we will show that a total of $2(M-1)$ SDOF can be achieved. The main idea is to treat one transmitter, say transmitter $1$, as a special sender, which transmits only artificial noise. For the rest of the $(M-1)$ transmitters, each sends two confidential messages intended for two receivers. By aligning the artificial noise and interference in the same subspace, the information leakage can be bounded. We illustrate the alignment design in Fig. \ref{Fig:XNC2ANA}. Therefore, for message $W_{k,1}$, $k \in {1,2}$, the rate $R_{k1} = 0$. For transmitters $2$ to $M$, we shall show that each confidential message $W_{k,j}$ ($j \in \mathcal{M}-1$), has one SDOF over the channel extensions.
Before presenting the detailed transmission scheme, let us first provide the channel input-output relation for the $M$-symbol extension as follows:
\begin{align}
	\nonumber
	Y_{k}(t) = \sum_{m\in \mathcal{M}} \mathbf{H}_{km}(t)X_{m}(t) + N_{k}(t), ~ k\in \mathcal{K},
\end{align}
where $\mathbf{H}_{km}(t)$ is an $M \times M$ diagonal matrix with diagonal elements $h_{km}(t(M-1)+i)$ for $i=[1:M]$. The terms $X_{m}(t), N_{k}(t)$ and $Y_{k}(t)$ are the $M \times 1$ input, noise and output vectors, respectively. In the following, the time slot $t$ is omitted for simplicity. 

The transmitted signal of transmitter $1$ is
\begin{align} \nonumber
	X_{1} & = \boldsymbol{\Phi}^{[1]} \nu,
\end{align}
where $\nu$ is an artificial noise symbol chosen from $\mathcal{CN}(0,P)$, and $\boldsymbol{\Phi}^{[1]}$ is an $M \times 1$ beamforming matrix. Note that we enforce the artificial noise symbol to be Gaussian, as it can provide the maximum differential entropy to confuse the eavesdropper. At the other transmitters, the transmitted signals can be written as
\begin{align} \nonumber
	X_{j} & = \sum_{i=1,2} \boldsymbol{\Phi}^{[ij]} \mu_{ij}, ~j \in \mathcal{M}, j \neq 1,
\end{align}
where $\mu_{ij}$ is the confidential message symbol originating from transmitter $j$ to receiver $i$, and the beamforming matrix $\boldsymbol{\Phi}^{[ij]}$ has dimension $M \times 1$. Then, the received signal at the receiver $k$ is ($l \neq k$)
\begin{align} \nonumber
	Y_{k} & = \mathbf{H}_{k1}\boldsymbol{\Phi}^{[1]} \nu + \sum_{j=2}^{M} \mathbf{H}_{kj} \left(\sum_{i=1,2}\boldsymbol{\Phi}^{[ij]} \mu_{ij} \right) + N_{k} \\ \nonumber
	& = \sum_{j=2}^{M} \mathbf{H}_{kj} \boldsymbol{\Phi}^{[kj]} \mu_{kj} + \sum_{j=2}^{M} \mathbf{H}_{kj} \boldsymbol{\Phi}^{[lj]} \mu_{lj} + \mathbf{H}_{k1} \boldsymbol{\Phi}^{[1]} \nu + N_{k},
\end{align}
where $\mathbf{H}_{kj}$ is an $M \times M$ diagonal matrix with each diagonal element chosen independently from a continuous distribution. In the achievable scheme, we aim to perfectly align the interference with artificial noise. Therefore, at receiver $1$, we have
\begin{align} \label{Eqn:BFRelation1}
	\mathbf{H}_{1j}\boldsymbol{\Phi}^{[2j]} = \mathbf{H}_{11}\boldsymbol{\Phi}^{[1]}, ~ j \in \mathcal{M}-1,
\end{align}
and similarly, at receiver $2$, we have
\begin{align} \label{Eqn:BFRelation2}
	\mathbf{H}_{2j}\boldsymbol{\Phi}^{[1j]} = \mathbf{H}_{21}\boldsymbol{\Phi}^{[1]}, ~ j \in \mathcal{M}-1.
\end{align}
Thus, we have
\begin{align} \nonumber
	\boldsymbol{\Phi}^{[kj]} = \mathbf{H}_{ij}^{-1}\mathbf{H}_{i1}\boldsymbol{\Phi}^{[1]}, j \in \mathcal{M}-1, k,i \in \{1,2\}, k \neq i.
\end{align}
As the next step, we have to show the effective channel matrices at receivers are full rank. Let $\boldsymbol{\Lambda}^{[k]}$ represent the effective channel matrix at receiver $k$, where
\begin{align} \nonumber
	\boldsymbol{\Lambda}^{[k]} & = \left[\mathbf{H}_{k2}\boldsymbol{\Phi}^{[k2]}~\mathbf{H}_{k3}\boldsymbol{\Phi}^{[k3]}~ \cdots~ \mathbf{H}_{kM}\boldsymbol{\Phi}^{[kM]}~ \mathbf{H}_{k1}\boldsymbol{\Phi}^{[1]} \right].
\end{align}
Let $\boldsymbol{\Phi}^{[1]} = [\phi_{1}~\phi_{2}~ \cdots~ \phi_{M}]^{T}$,
where each element $\phi_{m}$ ($m \in \mathcal{M}$) is chosen independently from a continuous distribution. With the relations (\ref{Eqn:BFRelation1}) and (\ref{Eqn:BFRelation2}), the effective channel matrices at receivers are presented in the bottom of the next page.
\begin{figure*}[bp]
\hrulefill
\begin{align} \nonumber
	\boldsymbol{\Lambda}^{[1]} & = \left[\mathbf{H}_{12}\mathbf{H}_{22}^{-1}\mathbf{H}_{21}\boldsymbol{\Phi}^{[1]}~~ \mathbf{H}_{13}\mathbf{H}_{23}^{-1}\mathbf{H}_{21}\boldsymbol{\Phi}^{[1]}~~ \cdots~~ \mathbf{H}_{1M}\mathbf{H}_{2M}^{-1}\mathbf{H}_{21}\boldsymbol{\Phi}^{[1]}~~ \mathbf{H}_{11}\boldsymbol{\Phi}^{[1]}\right]	\\ \nonumber
	\boldsymbol{\Lambda}^{[2]} & = \left[\mathbf{H}_{22}\mathbf{H}_{12}^{-1}\mathbf{H}_{11}\boldsymbol{\Phi}^{[1]}~~ \mathbf{H}_{23}\mathbf{H}_{13}^{-1}\mathbf{H}_{11}\boldsymbol{\Phi}^{[1]}~~
\cdots~~ \mathbf{H}_{2M}\mathbf{H}_{1M}^{-1}\mathbf{H}_{11}\boldsymbol{\Phi}^{[1]}~~ \mathbf{H}_{21}\boldsymbol{\Phi}^{[1]} \right]
\end{align}
\end{figure*}
In order to show that $\boldsymbol{\Lambda}^{[k]}$ has full rank almost surely, we will equivalently show that $|\boldsymbol{\Lambda}^{[k]}| \neq 0$ with probability one. We adopt the method given in the proof of Lemma $1$ in \cite{Cadambe2009}. Let $\lambda^{[k]}_{ij}$ represent the element in the $i$-th row and $j$-th column of $\boldsymbol{\Lambda}^{[k]}$. We observe that every $\lambda^{[k]}_{ij}$ can be written in the following form: $\lambda_{ij}^{[k]} = \prod_{q=1}^{Q} \left(\beta_{i}^{[q]}\right)^{\alpha_{ij}^{[q]}}$, where $\beta_{i}^{[q]}$ is a random variable and all exponents are integers, $\alpha_{ij}^{[q]} \in \mathbb{Z}$. Furthermore,
\begin{itemize}
	\item $\beta_{i}^{[q]} | \{ \beta_{i^{'}}^{[q^{'}]}, \forall (i,q) \neq (i^{'}, q^{'}) \}$ has a continuous cumulative probability distribution, and
	\item $\forall i,j,j^{'} \in \mathcal{M}$ and $j \neq j^{'}$
	\begin{equation}
		\nonumber
		\left(\alpha_{ij}^{[1]}, \alpha_{ij}^{[2]}, \dots, \alpha_{ij}^{[Q]} \right) \neq \left( \alpha_{ij^{'}}^{[1]}, \alpha_{ij^{'}}^{[2]}, \dots, \alpha_{ij^{'}}^{[Q]}\right).
	\end{equation}
\end{itemize}
Let $C_{ij}^{[k]}$ represent the cofactor corresponding to $\lambda_{ij}^{[k]}$. Then
\begin{equation} \nonumber
	|\boldsymbol{\Lambda}^{[k]}| = \lambda_{11}^{[k]}C_{11}^{[k]} + \lambda_{12}^{[k]}C_{12}^{[k]} + \dots + \lambda_{1M}^{[k]}C_{1M}^{[k]}.
\end{equation}
$|\boldsymbol{\Lambda}^{[k]}| = 0$ with nonzero probability only if at least one of the following two conditions is satisfied:
\begin{itemize}
	\item $\beta_{i}^{[q]}$, $q = 1, 2, \dots, Q$ are roots of the polynomial formed by setting $|\boldsymbol{\Lambda}^{[k]}| = 0$.
	\item The polynomial is the zero polynomial.
\end{itemize}
Note that $\beta_{i}^{[q]}$ has a continuous cumulative joint distribution conditioned on $C_{1l}$, $l \in \mathcal{M}$. Therefore, the probability of $\beta_{i}^{[q]}$ taking values from a finite root set goes to zero. Thus, the first condition collapses almost surely. For the second condition, since each $\beta_{i}^{[q]}$ has a unique set of exponents, the argument of the zero polynomial holds with positive probability only if $\textrm{Pr}(C_{1l}^{[k]} = 0) > 0$, for $l \in \mathcal{M}$. Because $C_{1l}^{[k]}$ is also the determinant of a submatrix of $\boldsymbol{\Lambda}^{[k]}$, the same argument can be iteratively performed, until reaching to a single element matrix containing $\lambda_{Ml}$. Therefore,
\begin{equation} \nonumber
	\textrm{Pr}(|\boldsymbol{\Lambda}^{[k]}| = 0) > 0 \Rightarrow \textrm{Pr}(|\lambda_{Ml}^{[k]}| = 0) > 0.
\end{equation}
Since $\lambda_{Ml}^{[k]}$ has the form of products of continuous random variables, we can conclude that $\textrm{Pr}(|\lambda_{Ml}^{[k]}| = 0) = 0$. Therefore $|\boldsymbol{\Lambda}^{[k]}| \neq 0$ almost surely, and $\boldsymbol{\Lambda}^{[k]}$ is of full rank almost surely.

Because the desired signal occupies $M-1$ dimensions of the $M$-symbol extension in $\boldsymbol{\Lambda}^{[k]}$, we can write the sum rate of confidential messages $W_{kj}$ as follows:
\begin{align}
	\nonumber
	\sum_{j=2}^{M} R_{kj} = \frac{M-1}{M} \textrm{log}(P) + o(\textrm{log}(P)), k = 1,2.
\end{align}
In the following, we compute the rate equivocation after collecting the whole codeword. By choosing $\mathcal{S}_{\mathcal{I}} = {2}$, $\mathcal{S}_{\mathcal{J}} = \mathcal{M}$, the equivocation at receiver $1$ is
\begin{align}
	\nonumber
	& ~ \Delta_{\mathcal{S}_{\mathcal{I}}, \mathcal{S}_{\mathcal{J}}}^{[1]} = \Delta_{2,\mathcal{M}} = 1 - \frac{I(\mathbf{W}_{2,\mathcal{M}}; Y_{1}^{n})}{n\sum_{j \in \mathcal{M}}R_{2j}} \\ \nonumber
	& \geq 1 - \frac{ \frac{n}{M} I(\boldsymbol{\mu}_{2,\mathcal{M}}; Y_{1})}{n\sum_{j \in \mathcal{M}}R_{2j}} \\ \nonumber
	& = 1 - \frac{ \frac{n}{M} \left(I(\boldsymbol{\mu}_{2, \mathcal{M}}, \boldsymbol{\nu}; Y_{1}) - I(\boldsymbol{\nu}; Y_{1} | \boldsymbol{\mu}_{2, \mathcal{M}})\right)}{n\sum_{j \in \mathcal{M}}R_{2j}},
\end{align}
by the data processing inequality and the fact that channels are memoryless, where $\boldsymbol{\mu}_{k,\mathcal{M}} = \{\mu_{k,j}, j \in \mathcal{M}\}$, with $\mu_{k,1} \in \emptyset$. We next bound the mutual information terms
\begin{align}
	\nonumber
	& ~ I(\boldsymbol{\mu}_{2, \mathcal{M}}, \boldsymbol{\nu}; Y_{1}) \leq I(\boldsymbol{\mu}_{2, \mathcal{M}}, \boldsymbol{\nu}; Y_{1}, \boldsymbol{\mu}_{1,\mathcal{M}}) \\ \nonumber
	& = I(\boldsymbol{\mu}_{2, \mathcal{M}}, \boldsymbol{\nu}; Y_{1} | \boldsymbol{\mu}_{1,\mathcal{M}}) \\ \nonumber
	& = h(Y_{1} | \boldsymbol{\mu}_{1,\mathcal{M}}) - h(Y_{1} | \boldsymbol{\mu}_{1,\mathcal{M}}, \boldsymbol{\mu}_{2, \mathcal{M}}, \boldsymbol{\nu}) = \textrm{log}(P) - O(1).
\end{align}
Similarly, $I(\boldsymbol{\nu}; Y_{1} | \boldsymbol{\mu}_{2, \mathcal{M}})  = \textrm{log}(P) + o(\textrm{log}(P)).$
Therefore, we can show the equivocation
\begin{equation}
	\nonumber
	\lim_{n,P \rightarrow \infty} \Delta_{2,\mathcal{M}}^{[1]} = 1.
\end{equation}
A similar argument can also be applied to receiver $2$. Overall, the sum SDOF $\frac{2(M-1)}{M}$ can be achieved for the $M \times 2$ XNCM.

For $K \geq 3$, the SDOF lower bound overlaps with the results in \cite{Tiangao2008}, where random binning is used to provide secrecy. As we will show in the proof of \emph{Lemma} \ref{lemma:SDOFXNCMEE}, this lower bound can also be achieved by an ANA scheme. It is worth noting that the proposed scheme collapses if we try to apply it directly for $K \geq 3$, in particular to use $K-1$ dimensions of AN from the helper. The problem is that the independence of the signal and interference space will be violated. A further study needs to be carried out for closing the SDOF gap between the upper and lower bound. In \cite{ZhaoThesis2015}, a toy example is presented to discuss the limitation of the current scheme for achieving the derived upper bound.
\end{proof}

\begin{remark}[Time sharing of transmitters] \rm
We note that in the above proof, the optimal sum SDOF is essentially achieved by treating one transmitter as a helper. In order to achieve a symmetric secrecy rate for each transmitter, a time sharing protocol can be applied, in which each transmitter take turns to be the helper in a block-fashion transmission. By this means, e.g., when $K=2$, each transmitter can obtain SDOF of $\frac{2(M-1)}{M^{2}}$.
\end{remark}

\begin{remark}[Relation to broadcast channels] \rm
As shown in the proof, for $M = 2$, the optimal sum SDOF of the $2 \times 2$ XNCM is shown to be $1$, which coincides with the optimal SDOF of the single-input single-output (SISO) broadcast channel with confidential messages if there exists an additional helper in the network, as shown in \cite{Jianwei2012}.
\end{remark}

\begin{remark}[Relation to the X channel with feedback] \rm
It has been shown in \cite{Zaidi2013} that when $M=K=2$ if the considered network has output feedback and delayed CSIT, the optimal sum SDOF is also $1$, which ties with our result here. Therefore, for a $2 \times 2$ X network, the output feedback and delayed CSIT can be as good as providing perfect CSIT in terms of degrees of freedom. 
\end{remark}

In the following, we investigate the SDOF of the $M \times K$ XNCM with an external eavesdropper (XNCM-EE). The achieved SDOF of the considered network also implies the lower bound in \emph{Theorem} \ref{theorem:SDOFXNCM} when $K \geq 3$, because removing the eavesdropper will not decrease the secrecy rate. We present the results in the following lemma. 

\begin{lemma}
\label{lemma:SDOFXNCMEE}
For the $M \times K$ XNCM-EE with time/frequency varying channels, the optimal sum SDOF can be bounded as $\frac{K(M-1)}{K+M-1} \leq d \leq \frac{K(M-1)}{K+M-2+\frac{1}{M}}$. 	
\end{lemma}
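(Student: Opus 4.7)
\emph{Plan.} The lemma has two assertions, which I handle separately. For the lower bound $d\ge \frac{K(M-1)}{K+M-1}$, my plan is to construct an artificial noise alignment scheme that generalizes the helper-based approach of \emph{Theorem} \ref{theorem:SDOFXNCM} to arbitrary $K$ while also satisfying the external-eavesdropper secrecy constraint (\ref{Eqn:secEva}). For the upper bound $d\le \frac{K(M-1)}{K+M-2+1/M}$, my plan is to sharpen the converse of \emph{Theorem} \ref{theorem:SDOFUB} by folding the EE secrecy constraint into its chain of inequalities (\ref{Eqn:SecConstr})--(\ref{Eqn:Ub3}).

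For the achievability, I would work over an asymptotic Cadambe--Jafar symbol extension of length $n_\Gamma=(K+M-1)(K+M-2)^\Gamma$. One transmitter (say transmitter 1) acts as a cooperative jammer, injecting $K$ independent Gaussian AN streams $\{\nu_k\}_{k\in\mathcal K}$ along the CJ beamformers that would otherwise have carried $W_{k,1}$, while transmitters $2,\ldots,M$ send the $K(M-1)$ confidential messages via the remaining CJ beamformers. The generic-matrix argument used in \emph{Theorem} \ref{theorem:SDOFXNCM} extends directly to show that at every legitimate receiver $k$, the desired $M$-dimensional subspace (containing the $M-1$ confidential streams intended for $k$ together with $\nu_k$) is linearly independent of the $(K-1)$-dimensional aligned interference subspace almost surely. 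Zero-forcing the $\nu_k$ direction recovers each confidential message at rate $\frac{1}{K+M-1}\log P + o(\log P)$, summing to the claimed SDOF. Confidentiality at each legitimate receiver and at the EE is then established through the equivocation-plus-AN-subtraction manipulation of \emph{Theorem} \ref{theorem:SDOFXNCM}: for each admissible subset $\mathbf W_{\mathcal S_\mathcal I,\mathcal S_\mathcal J}$ and each would-be decoder $Y$, write
\begin{equation}\nonumber
 \Delta \ge 1 - \frac{I(\mathbf W_{\mathcal S_\mathcal I,\mathcal S_\mathcal J},\boldsymbol\nu;Y) - I(\boldsymbol\nu;Y\,|\,\mathbf W_{\mathcal S_\mathcal I,\mathcal S_\mathcal J})}{H(\mathbf W_{\mathcal S_\mathcal I,\mathcal S_\mathcal J})},
\end{equation}
and argue that the generic independence of the AN beamformers from the (legitimate or EE) channel makes the two mutual-information terms agree at the $\log P$ scale, driving the equivocation to $1$ at high SNR.

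For the converse, I would retrace the derivation at each receiver $\hat k$ but modify the secrecy step (\ref{Eqn:SecConstr}) by combining the legitimate secrecy $I(\mathbf W_{\mathcal K-\hat k,\mathcal M};Y_{\hat k}^n)\le n\epsilon_2$ with the EE secrecy $I(\mathbf W_{\mathcal K,\mathcal M};Y_e^n)\le n\epsilon_e$. Concretely, substituting the joint bound
\begin{equation}\nonumber
 I(\mathbf W_{\mathcal K-\hat k,\mathcal M};\mathbf Y_{\mathcal K-\hat k}) \;\le\; I(\mathbf W_{\mathcal K-\hat k,\mathcal M};\mathbf Y_{\mathcal K-\hat k}\,|\,Y_e^n) + n\,o(\log P)
\end{equation}
into the chain leading to (\ref{Eqn:Ub3}) propagates the EE observation through the remainder of the argument. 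Re-applying the Role-of-a-Helper Lemma (\ref{Eqn:MacBound}) within this augmented observation yields an extra $\tfrac{n}{M}\sum_{i,j}R_{ij}$ term on the left-hand side of the weighted sum of Fano-type inequalities; summing over $\hat k\in\mathcal K$ and normalizing sharpens the denominator from $K+M-2$ to $K+M-2+\tfrac{1}{M}$.

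The main obstacle I foresee is twofold. On the achievability side, confirming that the helper-style $K$-stream AN injection actually suffices for EE secrecy requires a new generic-matrix argument covering the EE channel, which is not part of the CJ alignment design; it may in fact call for a more intricate AN alignment that effectively pairs each confidential message with an AN component in the EE's observation space. On the converse side, the bookkeeping that turns the extra EE-induced slack into \emph{exactly} the $\tfrac{1}{M}$ improvement, rather than some weaker constant, requires a careful balancing of the Fano-plus-secrecy terms at both the legitimate receivers and the EE within the Role-of-a-Helper step.
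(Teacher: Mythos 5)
Your overall plan points in the right direction (a helper transmitter whose message slots are replaced by artificial noise, plus a sharpened converse that exploits the EE observation), but both halves stop short of the actual argument, and in each case the missing piece is the crux. On achievability, you explicitly flag as an unresolved obstacle the need for ``a new generic-matrix argument covering the EE channel'' --- but this is precisely what the paper's construction supplies, and without it the scheme does not satisfy (\ref{Eqn:secEva}). In the paper, the beamformers are organized into $K$ alignment blocks, one per intended receiver $j$, and block $j$ imposes $\Gamma=K(M-1)$ containment relations $\mathbf{H}_{km}\boldsymbol{\Phi}^{[jm]}\prec\mathbf{H}_{k1}\boldsymbol{\Phi}^{[j1]}$ ranging over $k\in\{\mathcal{K}-j,e\}$ and $m\in\mathcal{M}-1$; the eavesdropper is built into the alignment from the start, not patched on afterwards. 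Satisfying all $\Gamma$ relations simultaneously forces the Cadambe--Jafar asymmetric construction in which the AN block has dimension $(n+1)^{\Gamma}$ while each message block has dimension $n^{\Gamma}$, over an extension of length $\mu_{n}=K(n+1)^{\Gamma}+(M-1)n^{\Gamma}$; the SDOF $\frac{K(M-1)}{K+M-1}$ is the limit of $\frac{K(M-1)n^{\Gamma}}{\mu_{n}}$. Your proposed extension length $(K+M-1)(K+M-2)^{\Gamma}$ and the symmetric per-stream picture (each message paired with an equal-dimensional AN direction) do not accommodate the strict containment the alignment requires, and your secrecy argument still needs the rank fact (Lemma \ref{Theorem:lemma:RankInc}) that the AN spans \emph{every} dimension of the aligned interference space at the EE, so that $I(\boldsymbol{\mu},\boldsymbol{\nu};Y_{e})$ and $I(\boldsymbol{\nu};Y_{e}|\boldsymbol{\mu})$ agree at the $\log P$ scale.

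On the converse, your mechanism is different from the paper's and, as sketched, does not produce the $\frac{1}{M}$ term. The paper does \emph{not} keep the legitimate receivers' secrecy constraints and condition (\ref{Eqn:SecConstr}) on $Y_{e}^{n}$; it drops them entirely, starts from the \emph{full} message set $\mathbf{W}_{\mathcal{K},\mathcal{M}}$ with only the EE constraint, and obtains $n\sum R_{ij}\le h(\tilde{\mathbf{X}}_{\mathcal{M}})-h(\mathbf{Y}_{e})+no(\log P)$. The entire $\frac{1}{M}$ improvement then comes from the single cancellation $h(\mathbf{Y}_{e})\ge h(\mathbf{H}_{e1}X_{1}^{n}+N_{e}^{n})=h(\tilde{X}_{1}^{n})+no(\log P)$, which removes one of the $M$ noisy-input entropies before the Role-of-a-Helper Lemma is applied; only $M-1$ applications remain, yielding $(K+M-1)H(\mathbf{W}_{\mathcal{K},\mathcal{M}})-H(\mathbf{W}_{\mathcal{K},\mathcal{M}-1})\le(M-1)\sum_{\hat{k}}h(\mathbf{Y}_{\hat{k}})$ and hence the denominator $K+M-2+\frac{1}{M}$ after symmetrization. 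Your claim that the augmented observation ``yields an extra $\frac{n}{M}\sum_{i,j}R_{ij}$ term on the left-hand side'' is asserted rather than derived, and you acknowledge you cannot yet close the bookkeeping; without identifying the $h(\mathbf{Y}_{e})$-versus-$h(\tilde{X}_{1}^{n})$ cancellation, the derivation collapses back to the weaker $K+M-2$ bound of \emph{Theorem} \ref{theorem:SDOFUB}.
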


\begin{proof}
For the converse, let us first remove the secrecy constraints at all $K$ receivers, which certainly will not harm the secrecy capacity region. Then the considered network can be seen as an $M \times K$ X network with only an external eavesdropper, the sum SDOF of which can be bounded above by $\frac{K(M-1)}{K+M-2+\frac{1}{M}}$ based on Appendix \ref{Appendix:Proof:UB_XNCMEE}. Therefore it also serves as an SDOF upper bound for that of the considered $M \times K$ XNCM-EE.  

The detailed proof for the lower bound is presented as follows. Let $\Gamma = K(M-1)$. We will show that the SDOF $d = \frac{K(M-1)n^{\Gamma}}{K(n+1)^{\Gamma} + (M-1)n^{\Gamma}}$ can be achieved for any $n \in \mathbb{N}$, which yields $d = \frac{K(M-1)}{K+M-1}$ when taking the supremum for all $n$. Let $\mu_{n} = K(n+1)^{\Gamma} + (M-1)n^{\Gamma}$. We consider a $\mu_{n}$ symbol extension over the time-varying channel. Then, the channel input-output relationship is
\begin{align} \nonumber
    Y_{k} = \sum_{i \in \mathcal{M}}\mathbf{H}_{ki} X_{i} + N_{k}, ~ \forall k \in \{\mathcal{K},e\}, 
\end{align}
where $\mathbf{H}_{ki}$ is a $\mu_{n} \times \mu_{n}$ diagonal matrix, and $X_{i}$ is the $\mu_{n} \times 1$ signal vector from transmitter $i$.

Our essential idea is to let one specific transmitter, say $1$, send artificial noise only, while the other transmitter sends confidential messages to the intended receiver. Meanwhile we propose an interference alignment scheme to align the confidential messages with artificial noise at every unintended receiver and also the external eavesdropper, to avoid information leakage. For transmitters $2$ to $M$, we can design the transmitted signal as follows:
\begin{align} \nonumber
    X_{i} = \sum_{j=1}^{K} \boldsymbol{\Phi}^{[ji]} \boldsymbol{\mu}_{ji}, ~ \forall i \in \mathcal{M}-1,
\end{align}
where $\boldsymbol{\mu}_{ji}$ is the $n^{\Gamma} \times 1$ symbol vector coded from the confidential message $W_{ji}$, and $\boldsymbol{\Phi}^{[ji]}$ is the corresponding $\mu_{n} \times n^{\Gamma}$ beamforming matrix. At transmitter $M$, the signal can be specifically designed as
\begin{align} \nonumber
    X_{1} = \sum_{j=1}^{K} \boldsymbol{\Phi}^{[j1]} \boldsymbol{\nu}_{j1}, 
\end{align}
where $\boldsymbol{\nu}_{j1}$ is the $(n+1)^{\Gamma} \times 1$ artificial noise symbol vector chosen from Gaussian distribution $\mathcal{CN}(0, \frac{P}{(n+1)^{\Gamma}}\mathbf{I}_{(n+1)^{\Gamma}})$, and $\boldsymbol{\Phi}^{[ji]}$ is the corresponding $\mu_{n} \times (n+1)^{\Gamma}$ beamforming matrix. 
Thus, the received signal at the $k$-th receiver can be written as
\begin{align} \nonumber
    Y_{k} = \sum_{i=2}^{M}\mathbf{H}_{ki}\left(\sum_{j=1}^{K} \boldsymbol{\Phi}^{[ji]} \boldsymbol{\mu}_{ji} \right) + \mathbf{H}_{k1}\sum_{j=1}^{K} \boldsymbol{\Phi}^{[j1]} \boldsymbol{\nu}_{j1} + N_{k}.
\end{align}
In the following, we will introduce the details of the alignment. At receiver $k$, $k \in \mathcal{K}$, we would like to have the following alignment conditions, $\forall j \in \mathcal{K}-k$: 
\begin{align} \nonumber
\textrm{IA Block}~j&: \begin{cases} ~\mathbf{H}_{k2}\boldsymbol{\Phi}^{[j2]} \prec \mathbf{H}_{k1} \boldsymbol{\Phi}^{[j1]} \\ ~\mathbf{H}_{k3}\boldsymbol{\Phi}^{[j3]} \prec \mathbf{H}_{k1} \boldsymbol{\Phi}^{[j1]} \\ ~~~\vdots \\ ~\mathbf{H}_{kM}\boldsymbol{\Phi}^{[jM]} \prec \mathbf{H}_{k1} \boldsymbol{\Phi}^{[j1]}. \end{cases}
\end{align}
Generally speaking, there are $K-1$ alignment blocks at receiver $k$, and within each block we wish to align all the confidential messages intended for receiver $j$ to the subspace spanned by artificial noise $\boldsymbol{\nu}_{j1}$. Similarly, at the eavesdropper, we would like to have $K$ alignment blocks as follows, $\forall j \in \mathcal{K}$:
\begin{align} \nonumber
\textrm{IA Block}~j&: \begin{cases} ~\mathbf{H}_{e2}\boldsymbol{\Phi}^{[j2]} \prec \mathbf{H}_{e1} \boldsymbol{\Phi}^{[j1]} \\ ~\mathbf{H}_{e3}\boldsymbol{\Phi}^{[j3]} \prec \mathbf{H}_{e1} \boldsymbol{\Phi}^{[j1]} \\ ~~~\vdots \\ ~\mathbf{H}_{eM}\boldsymbol{\Phi}^{[jM]} \prec \mathbf{H}_{e1} \boldsymbol{\Phi}^{[j1]}. \end{cases}
\end{align}
Therefore, every confidential message intended for receiver $j \in \mathcal{K}$ is aimed to be aligned to the subspace of artificial noise $\boldsymbol{\Phi}^{[j1]}$ within the alignment block $j$. Let us collect the alignment block $j$ at every receiver, including the eavesdropper. All the relations can be written as
\begin{align} \nonumber
\begin{cases} ~\textrm{Receiver}~k \neq j: \begin{cases} ~\mathbf{H}_{k2}\boldsymbol{\Phi}^{[j2]} \prec \mathbf{H}_{k1} \boldsymbol{\Phi}^{[j1]} \\ ~\mathbf{H}_{k3}\boldsymbol{\Phi}^{[j3]} \prec \mathbf{H}_{k1} \boldsymbol{\Phi}^{[j1]} \\ ~~~\vdots \\ ~\mathbf{H}_{kM}\boldsymbol{\Phi}^{[jM]} \prec \mathbf{H}_{k1} \boldsymbol{\Phi}^{[j1]} \end{cases} \\
    ~\textrm{Eavesdroper}: \begin{cases} ~\mathbf{H}_{e2}\boldsymbol{\Phi}^{[j2]} \prec \mathbf{H}_{e1} \boldsymbol{\Phi}^{[j1]} \\ ~\mathbf{H}_{e3}\boldsymbol{\Phi}^{[j3]} \prec \mathbf{H}_{e1} \boldsymbol{\Phi}^{[j1]} \\ ~~~\vdots \\ ~\mathbf{H}_{eM}\boldsymbol{\Phi}^{[jM]} \prec \mathbf{H}_{e1} \boldsymbol{\Phi}^{[j1]}. \end{cases}
\end{cases}
\end{align}
Thus, there are $\Gamma = K(M-1)$ relations for alignment block $j$. To find the proper solution for all the beamforming matrices, we first let
\begin{align} \nonumber
    \boldsymbol{\Phi}^{[j2]} = \boldsymbol{\Phi}^{[j3]} = \cdots = \boldsymbol{\Phi}^{[jM]}.
\end{align}
Then all the $\Gamma$ relations can be written as
\begin{align} \nonumber
    \mathbf{T}^{[km]}\boldsymbol{\Phi}^{[j2]} \prec \boldsymbol{\Phi}^{[j1]}, ~ \forall k \in \{\mathcal{K}-j, e\}, ~m \in \mathcal{M}-1,
\end{align}
where
\begin{align} \nonumber
    \mathbf{T}^{[km]} = \left(\mathbf{H}_{k1}\right)^{-1} \mathbf{H}_{km}.
\end{align}
Reordering all the $\mathbf{T}^{[km]}$ by the index from $1$ to $\Gamma$, and following the method given in \cite{Cadambe2009}, $\boldsymbol{\Phi}^{[j1]}$ and $\boldsymbol{\Phi}^{[jM]}$ can be designed as
\begin{align}
    \nonumber
    \boldsymbol{\Phi}^{[j2]} & = \left\{ \left(\prod_{i=1,2,\dots, \Gamma} \left(\mathbf{T}^{[i]}\right)^{\alpha_{i}}\right) \mathbf{w}^{[j]} : \alpha_{i} \in \{1,2,\dots, n\} \right\} \\ \nonumber
    \boldsymbol{\Phi}^{[j1]} & = \left\{ \left(\prod_{i=1,2,\dots, \Gamma} \left(\mathbf{T}^{[i]}\right)^{\alpha_{i}}\right) \mathbf{w}^{[j]} : \alpha_{i} \in \{1,2,\dots, n+1\} \right\},
\end{align}
where $\mathbf{w}^{[j]}$ is the $\mu_{n} \times 1$ vector, with each element chosen independently from a continuous distribution with bounded absolute value. The same method can be applied for all $j \in \mathcal{K}$ alignment blocks. Therefore, the effective channel matrix at receiver $k$ can be written as
\begin{align} \nonumber
    \mathbf{C}_{k} & = \left[\mathbf{H}_{k1}\boldsymbol{\Phi}^{[k1]}~ \mathbf{H}_{k2}\boldsymbol{\Phi}^{[k2]}~ \cdots~ \mathbf{H}_{kM}\boldsymbol{\Phi}^{[kM]}~ \mathbf{I}_{k}\right] \\ \nonumber
        & = \left[\mathbf{H}_{k1}\boldsymbol{\Phi}^{[k1]}~ \mathbf{H}_{k2}\boldsymbol{\Phi}^{[k2]}~ \cdots~ \mathbf{H}_{kM}\boldsymbol{\Phi}^{[k2]}~ \mathbf{I}_{k}\right],    
\end{align}
with $\mathbf{I}_{k}$ defined on the top of next page.

\begin{figure*}[!h]
\begin{align} \nonumber
    \mathbf{I}_{k} = \left[\mathbf{H}_{k1}\boldsymbol{\Phi}^{[11]}~ \mathbf{H}_{k1}\boldsymbol{\Phi}^{[21]}~ \cdots~ \mathbf{H}_{k1}\boldsymbol{\Phi}^{[(k-1),1]}~ \mathbf{H}_{k1}\boldsymbol{\Phi}^{[(k+1),1]}~ \cdots~ \mathbf{H}_{k1}\boldsymbol{\Phi}^{[K1]}~ \right].
\end{align}
\hrulefill
\end{figure*}

Following \emph{Lemma} $1$ and \emph{Lemma} $2$ in \cite{Cadambe2009}, we can prove that $\mathbf{C}_{k}$ is of full rank almost surely, where the signal occupies $(M-1)n^{\Gamma}$ independent dimensions. Therefore, we have
\begin{align} \nonumber
	\sum_{m \in \mathcal{M}} R_{km} = \frac{K(M-1)n^{\Gamma}}{\mu_{n}} \textrm{log}(P) + o(\textrm{log}(P)).
\end{align}

In the following, we would like to show that the information leakage can be bounded such that the secrecy constraints at the receivers and eavesdropper are satisfied. At the eavesdropper, we can write the received signal as follows:
\begin{align} \nonumber
    Y_{e} & = [\mathbf{H}_{e1}\boldsymbol{\Phi}^{[11]}~ \mathbf{H}_{e1}\boldsymbol{\Phi}^{[21]}~ \cdots~ \mathbf{H}_{e1}\boldsymbol{\Phi}^{[K1]}] \left[\begin{array}{c} \boldsymbol{\nu}_{11} \\ \boldsymbol{\nu}_{21} \\ \vdots \\ \boldsymbol{\nu}_{K1} \end{array}\right] \\ \nonumber 
    & ~~ + \sum_{i=2}^{M}[\mathbf{H}_{ei}\boldsymbol{\Phi}^{[1i]}\boldsymbol{\mu}_{1i}~ \mathbf{H}_{ei}\boldsymbol{\Phi}^{[2i]}\boldsymbol{\mu}_{2i}~ \cdots~ \mathbf{H}_{ei}\boldsymbol{\Phi}^{[Ki]}\boldsymbol{\mu}_{Ki}] + N_{e}.
\end{align}
With $\mathcal{S}_{I} = \mathcal{K}$ and $\mathcal{S}_{J}=\mathcal{M}-1$, the information leakage at the eavesdropper is
\begin{align}
\nonumber
& I(\mathbf{W}_{\mathcal{S}_{I},\mathcal{S}_{J}}; Y_{e}^{n}) \leq \frac{n}{\mu_{n}}I(\boldsymbol{\mu}_{\mathcal{S}_{I},\mathcal{S}_{J}};Y_{e}) \\ \nonumber
& ~~~ = \frac{n}{\mu_{n}} \left( I(\boldsymbol{\mu}_{\mathcal{S}_{I},\mathcal{S}_{J}}, \boldsymbol{\nu}_{\mathcal{S}_{I},1};Y_{e}) - I(\boldsymbol{\nu}_{\mathcal{S}_{I},1};Y_{e} | \boldsymbol{\mu}_{\mathcal{S}_{I},\mathcal{S}_{J}})\right).
\end{align}
Let $\mathbf{A} = [\mathbf{H}_{e1}\boldsymbol{\Phi}^{[11]}~ \mathbf{H}_{e1}\boldsymbol{\Phi}^{[21]}~ \cdots~ \mathbf{H}_{e1}\boldsymbol{\Phi}^{[K1]}]$ and $\mathbf{B}_{i} = [\mathbf{H}_{ei}\boldsymbol{\Phi}^{[1i]}\boldsymbol{\mu}_{1i}~ \mathbf{H}_{ei}\boldsymbol{\Phi}^{[2i]}\boldsymbol{\mu}_{2i}~ \cdots~ \mathbf{H}_{ei}\boldsymbol{\Phi}^{[Ki]}\boldsymbol{\mu}_{Ki}]$. Because of the alignment blocks, it is readily shown that $\textrm{span}(\mathbf{B}_{i}) \subseteq \textrm{span}(\mathbf{A})$. By \emph{Lemma} \ref{Theorem:lemma:RankInc} in Appendix \ref{Appendix:Proof:LemmaRank}, we can observe that the artificial noise dominates every dimension of the received-signal's subspace. It can be shown that $I(\boldsymbol{\mu}_{\mathcal{S}_{I},\mathcal{S}_{J}}, \boldsymbol{\nu}_{\mathcal{S}_{I},1};Y_{e}) = K(n+1)^{\Gamma}\textrm{log}(P) + o(\textrm{log}(P))$. Likewise, $I(\boldsymbol{\nu}_{\mathcal{S}_{I},1};Y_{e} | \boldsymbol{\mu}_{\mathcal{S}_{I},\mathcal{S}_{J}}) = K(n+1)^{\Gamma}\textrm{log}(P) + o(\textrm{log}(P))$. Therefore, the information leakage is shown to be bounded by $o(\textrm{log}(P))$. It is readily shown that
\begin{align} \nonumber
& ~~~ \lim_{n,P \rightarrow \infty}\Delta_{\mathcal{S_{I}},\mathcal{S_{J}}}^{[e]} = 1 - \frac{I(\mathbf{W}_{\mathcal{K},\mathcal{M}}; Y_{e}^{n})}{nR_{\mathcal{K},\mathcal{M}}} = 1. 
\end{align}
The equivocation at the other receivers can be shown to have limit $1$ following the same method.  
Overall we have the sum SDOF of $d = \frac{K(M-1)n^{\Gamma}}{\mu_{n}} = \frac{K(M-1)n^{\Gamma}}{K(n+1)^{\Gamma} + (M-1)n^{\Gamma}}$, which approaches $\frac{K(M-1)}{K+M-1}$ for large $n$.
\end{proof}

\begin{remark}[Relation to interference channels] \rm
For the case $K=M \geq 3$, the derived SDOF lower bound agrees with the optimal SDOF of the $K$-user interference channel with confidential messages, even when an EE appears in the network. Therefore, treating the $X$ network as the corresponding interference channel yields the best known SDOF lower bound \cite{Jianwei2013}. 
\end{remark}

\begin{remark}[Only one external eavesdropper] \rm
As we see from the SDOF upper bound proof, the $M \times K$ wireless X network with only an external eavesdropper is considered, for which the sum SDOF can be bounded above by $\frac{K(M-1)}{K+M-2+\frac{1}{M}}$. By adopting the same transmission scheme as used in the proof of Lemma \ref{lemma:SDOFXNCMEE}, SDOF of $\frac{K(M-1)}{K+M-1}$ can also be achieved in this network. However, we conjuncture that there might exist another transmission scheme to inject AN only into the subspace of the eavesdropper to obtain higher SDOF. This problem is interesting for our future work.
\end{remark}

\begin{remark}[Comments for perfect secrecy] \rm
In order to study the SDOF, we have shown in the achievability proof that the information leakage at the eavesdropper can be bounded by $o(\textrm{log}(P))$, i.e., $\frac{1}{n}I(\mathbf{W}_{\mathcal{K}-k, \mathcal{M}}; Y_{k}^{n}) = o(\textrm{log}(P))$ at the receiver $k$, which satisfies the secrecy constraint defined in (\ref{Eqn:confidentiality}). In order to achieve perfect secrecy such that $\lim_{n \rightarrow \infty}\frac{1}{n}I(\mathbf{W}_{\mathcal{K}-k, \mathcal{M}}; Y_{k}^{n}) = 0$, we can place a random binning encoder in advance of the ANA operation with a rate penalty $\max_{j \in \{\mathcal{K},e\}-k}\frac{1}{M}I(\boldsymbol{\mu}_{k,m};Y_{j}|\boldsymbol{\mu}_{\mathcal{K},\mathcal{M}}-\boldsymbol{\mu}_{k,m}, \mathbf{H})$ for the message $W_{k,m}$, $k\in \mathcal{K}$ and $m \in \mathcal{M}$, where $\boldsymbol{\mu}_{k,m}$ represents the codeword after binning, $\boldsymbol{\mu}_{\mathcal{K},\mathcal{M}} - \boldsymbol{\mu}_{k,m}$ represents the set formed from $\boldsymbol{\mu}_{\mathcal{K},\mathcal{M}}$ after removing $\boldsymbol{\mu}_{k,m}$, and $\mathbf{H}$ represents the channel state over the alignment block. Note that the argument follows a standard analysis as shown in \cite{Jianwei2013}, \cite{SYang2013} and \cite{Zhao2014a}, because the ANA operation creates an equivalent memoryless channel. It is observed that the rate penalty will not affect the SDOF as shown in the above proof. The detailed proof for achieving perfect secrecy can be found in \cite{ZhaoThesis2015}.
\end{remark}

\section{Secure Degrees of Freedom of the $M \times K$ XNCM with Reconfigurable Antennas: A Blind Artificial Noise Alignment Approach}
\label{section:SDOFBlind}

In this section, we study the achieved SDOF of the XNCM with reconfigurable antennas, where the each receiver is equipped with one antenna that can switch among $M$ predefined modes. When an antenna switches its mode according to a predefined pattern, it offers a chance to artificially manipulate the channel coherence structure \cite{Tiangao2011}. The received signal for receiver $k$, at time $t$ with antenna mode $\hat{m}_{k}(t)$, is
\begin{align}
	y_{k}(t) = \sum_{m \in \mathcal{M}}h_{km}(\hat{m}_{k}(t))x_{m}(t) + n_{k}(t), ~k \in \mathcal{K},
\end{align}
where $h_{km}(\hat{m}_{k}(t))$ represents the channel coefficient from transmitter $m$ to receiver $k$. 

In the following, we propose a blind ANA that combines blind interference alignment and artificial noise transmission. The transmission follows a similar principle as presented in Section \ref{section:SDOFXNCM}, where we aim to inject artificial noise into the interference space. However, CSIT is not required in this case. IA is based on the channel coherence structure by switching antenna modes. It is worth noting that we assume that the predefined antenna switching mode $\hat{m}_{k}(t)$ is known at the transmitter side; however, it is hidden from all other receivers. Intuitively, the predefined antenna switching functions can be used as the \emph{key} to provide confidentiality. In order to present the achievable scheme in the XNCM, we first introduce the broadcast channel with confidential messages (BCC). We propose a blind ANA scheme in the BCC such that no antenna cooperation is involved, which implies the same achievable SDOF of the corresponding XNCM. 

\begin{definition}\emph{$K$-user $M \times 1$ BCC with reconfigurable antennas.}
\label{definition:BCCM}

Consider the $K$-user broadcast channel, where the transmitter has $M$ antennas and each receiver is equipped with one reconfigurable antenna which can switch among $M$ predefined modes. The received signal at receiver $k$ is
\begin{align} \nonumber
	y_{k}(t) = H_{k}(\hat{m}_{k}(t))X(t) + n_{k}(t), ~ k \in \mathcal{K},
\end{align}
where $H_{k}(\hat{m}_{k}(t)) \!=\! [h_{k1}(\hat{m}_{k}(t))~ h_{k2}(\hat{m}_{k}(t))~ \!\dots\! h_{kM}(\hat{m}_{k}(t))]$ represents the $1 \times M$ channel vector with the mode $\hat{m}(t) \in \mathcal{M}$. $X(t) \in \mathbb{C}^{M \times 1}$ is the transmitted signal and $n_{k}(t) \sim \mathcal{CN}(0,1)$. We assume each channel coefficient is drawn independently from a continuous distribution with finite support. The channel coherence time is assumed to be long enough such that the channels stay constant across a supersymbol, which will be defined in the sequel, and we impose the power constraint $\sum_{t=1}^{n} X^{H}(t)X(t) \leq nP$. We further assume the transmitter sends an independent confidential message $W_{k} \in \mathcal{W}_{k} = [1:2^{nR_{k}}]$ with secrecy rate $R_{k}$, which has to be hidden from the other receivers. With $\mathcal{W} = \{\mathcal{W}_{i}\}_{i=1}^{K}$, a secrecy rate tuple $\mathbf{R} = \{R_{i}\}_{i=1}^{K}$ is achieved if there exists a secret codebook $(n, \mathbf{R}, \mathcal{W})$ to satisfy the following constraints simultaneously: 1) reliability: $\limsup_{n \rightarrow \infty} \textrm{Pr}(\hat{W}_{k} \neq W_{k}) = 0$ for receiver $k$, and 2) confidentiality: $\lim_{n,P \rightarrow \infty}\Delta_{\mathcal{S}}^{[k]} \overset{\triangle}{=} \lim_{n,P\rightarrow \infty}\frac{H(\mathbf{W}_{\mathcal{S}}|Y_{k}^{n},\mathbf{H}^{n})}{H(\mathbf{W}_{\mathcal{S}})} = 1$ where $\mathbf{H}^{n}$ represents the channel state sequence and $\mathbf{W}_{\mathcal{S}} = \{W_{i}: i \in \mathcal{S}\}$ for all $\mathcal{S} \subseteq \mathcal{K}-k$ such that $H(\mathbf{W}_{\mathcal{S}}) > 0$.
\end{definition} 

\begin{lemma}
\label{lemma:SDOFBCCM}
For the $K$-user $M \times 1$ BCC in \emph{Definition} \ref{definition:BCCM}, the SDOF $\frac{K(M-1)}{M+K-1}$ can be achieved.
\end{lemma}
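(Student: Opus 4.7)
The plan is to adapt the blind interference alignment framework for the $M\times 1$ MISO broadcast channel of \cite{Tiangao2011} and overlay artificial noise as in the proof of Theorem \ref{theorem:SDOFXNCM}. Over a supersymbol of length $M+K-1$, the staggered antenna-switching schedule of blind IA provides each receiver $k$ with an $M$-dimensional intended signal subspace, while at every other receiver $j\neq k$ the contribution of user $k$'s $M$ transmit-signal components collapses into a single one-dimensional subspace. Replacing one of the $M$ intended streams per user by an AN symbol then secures the remaining $M-1$ message streams and yields sum SDOF $\frac{K(M-1)}{M+K-1}$.

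Concretely, I would first specify the private switching pattern: for each $k\in\mathcal{K}$ there is a dedicated length-$M$ signal block within the supersymbol during which receiver $k$ cycles through all $M$ antenna modes, whereas every other receiver keeps its antenna in a baseline mode throughout that block. Since each channel coefficient is drawn from a continuous distribution, the resulting $M\times M$ effective channel matrix at receiver $k$ is full rank almost surely, whereas the effective channel at any $j\neq k$ during that block is a single $1\times M$ row repeated $M$ times. In receiver $k$'s signal block the transmitter sends $V_k\boldsymbol{\mu}_k+v_k\nu_k$, where $\boldsymbol{\mu}_k\in\mathbb{C}^{M-1}$ codes the confidential message $W_k$, $\nu_k\sim\mathcal{CN}(0,P)$ is an AN symbol, and the $M\times M$ precoder $[V_k\mid v_k]$ is chosen generically. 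Receiver $k$ decodes $\boldsymbol{\mu}_k$ by inverting its effective channel, while any $j\neq k$ sees both $\boldsymbol{\mu}_k$ and $\nu_k$ projected onto the same one-dimensional subspace. The secrecy analysis then mirrors that of Theorem \ref{theorem:SDOFXNCM}: for every $\mathcal{S}\subseteq\mathcal{K}-k$ with $H(\mathbf{W}_{\mathcal{S}})>0$, I would bound $I(\mathbf{W}_{\mathcal{S}};Y_k^n\mid\mathbf{H}^n)$ by $I(\boldsymbol{\mu}_{\mathcal{S}},\boldsymbol{\nu}_{\mathcal{S}};Y_k\mid\mathbf{H})-I(\boldsymbol{\nu}_{\mathcal{S}};Y_k\mid\boldsymbol{\mu}_{\mathcal{S}},\mathbf{H})$ and observe that both mutual informations equal $|\mathcal{S}|\log P+o(\log P)$ because the AN has power $\Theta(P)$ and dominates each collapsed one-dimensional subspace; hence the leakage is $o(\log P)$ and $\Delta_{\mathcal{S}}^{[k]}\to 1$. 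Summing $M-1$ message DOF per user over $M+K-1$ slots gives the claimed SDOF.

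The main obstacle is the supersymbol bookkeeping: because the $K$ signal blocks have to fit into $M+K-1$ rather than $KM$ slots, they must overlap, so in the overlap slots the transmitter superposes signals addressed to several users. One must verify that the staggered pattern still delivers (i) full rank at every intended receiver within its own block and (ii) rank one at every other receiver within that block, uniformly in $k$. This is exactly what the construction of \cite{Tiangao2011} accomplishes, and layering AN on top does not disturb the alignment because the AN is injected through the same precoder as the messages. A secondary subtlety, already addressed by the dimension-counting argument above, is that the secrecy condition must hold for \emph{every} subset $\mathcal{S}\subseteq\mathcal{K}-k$, not merely for individual messages. Finally, since no transmit-side cooperation is needed across the $M$ transmit antennas, the same construction carries over verbatim to the $M\times K$ XNCM and establishes the lower bound claimed in the introduction.
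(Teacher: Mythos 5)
There is a genuine gap in your supersymbol construction. You claim the blind IA schedule fits into a supersymbol of length $M+K-1$, with a dedicated length-$M$ block per receiver, and that at every unintended receiver the interference from each user collapses into a \emph{one}-dimensional subspace that a single AN symbol can fill. This is only true for $M=2$. For $M\geq 3$ the construction of \cite{Tiangao2011} that you defer to does not operate over $M+K-1$ slots: it requires a supersymbol of length $\beta=(M+K-1)(M-1)^{K-1}$ built on an $\alpha=(M-1)^{K-1}$ symbol extension, with each user carrying $M\alpha$ streams whose interference at an unintended receiver collapses into an $\alpha$-dimensional (not one-dimensional) subspace. The paper's own $K=2$, $M=3$ example makes this concrete: the supersymbol has $8$ slots (not $4$), each user sends four message streams plus \emph{two} AN streams, and the aligned interference occupies a $2$-dimensional subspace. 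Your four-slot, one-AN-symbol version cannot be realized by antenna switching alone, because the alignment constraints of the $K$ receivers conflict within $M+K-1$ slots; this is precisely why the longer interleaved supersymbol exists.

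The mischaracterization is not merely bookkeeping, because your secrecy argument leans on it. Once the interference subspace per interferer has dimension $\alpha>1$, it is no longer automatic that the AN ``dominates each collapsed subspace'': one must exhibit an inner precoder whose AN block remains full rank $\alpha$ after passing through the effective interference channel, for every unintended receiver simultaneously, and a generic choice is not enough since the precoder must be channel-independent (there is no CSIT). The paper resolves this by fixing $\mathbf{V}_{\boldsymbol{\nu}}=\mathbf{I}\otimes\Theta$ with $\Theta=[1~0~\cdots~0]^{T}$, verifying $\textrm{rank}\{\mathbf{H}_{k}\boldsymbol{\Phi}^{[j]}\mathbf{V}_{\boldsymbol{\nu}}\}=\alpha$ almost surely via a diagonal reduction, and then invoking the rank lemma (\emph{Lemma} \ref{Theorem:lemma:RankInc}) to show that $\mathbf{Q}_{j}\mathbf{Q}_{j}^{H}$ and $\bar{\mathbf{Q}}_{j}\bar{\mathbf{Q}}_{j}^{H}$ have equal rank, so the two $\log\det$ terms in the leakage bound cancel to $o(\textrm{log}(P))$. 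Your proposal needs this construction and verification to go through; as written, the rate bookkeeping gives the right ratio $\frac{K(M-1)}{M+K-1}$ but the scheme behind it does not exist for $M\geq 3$.
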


\begin{figure*}[t!]
\centerline{\epsfig{file=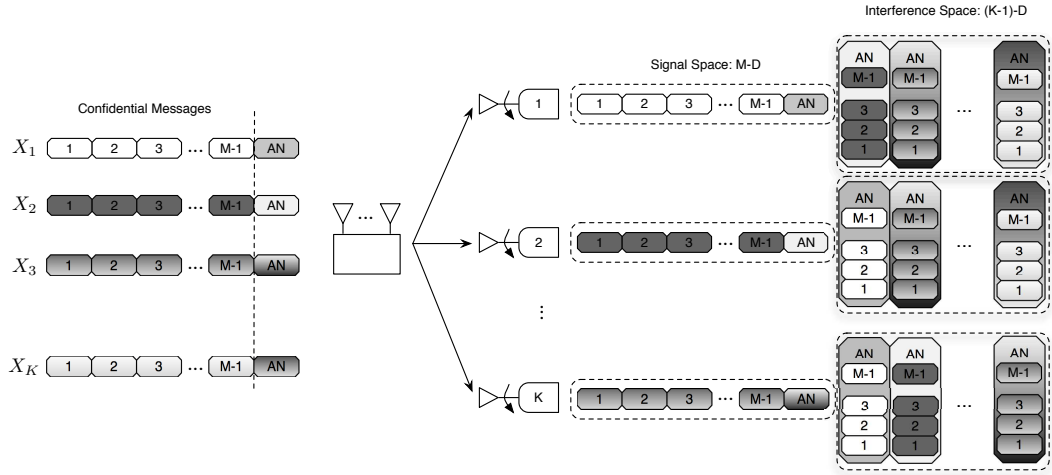,width=140mm}}
\caption{Blind artificial noise alignment for the $K$-user $M \times 1$ BCC.}
\label{Fig:BCCANA}
\end{figure*}

\begin{proof}
The detailed proof can be found in Appendix \ref{Appendix:Proof:TheoremKBCC}. We briefly provide the main idea of achievability as follows: We intend to transmit $M$ independent streams to each receiver, which contain $M-1$ streams of independent confidential messages and $1$ stream of artificial noise. Using interference alignment based on the channel coherence structure due to antenna switching, these $M$ streams cast exactly $M$ dimensions ($M$-D) at the intended receiver, while they overlap in a $1$-D subspace at all the other $K-1$ receivers, which is filled with the artificial noise. Because the artificial noise has the same scaled power compared with the messages in the interference space, the equivocation can be bounded to zero asymptotically. Compared with the achievable DOF of the same network without secrecy constraint, $\frac{M}{M+K-1}$, the price to pay for the confidential message is $\frac{1}{M+K-1}$ for every receiver. As we shall highlight below, in the proposed transmission scheme, the DOF loss $\frac{1}{M+K-1}$ is exactly the dimension occupied by the artificial noise in the transmitted signal. The key idea is illustrated in Fig. \ref{Fig:BCCANA}.
\end{proof}

In order to interpret our main idea, we provide the details of the achievable scheme in the two user $M \times 1$ BCC for $M=2$ and $M=3$. Throughout the transmission scheme, we adopt the antenna switching pattern proposed in \cite{Tiangao2011}, such that the channel coherence structure can be manipulated in a systematic way. 

\subsection{$K=2$, $M=2$}

In this case, each receiver has two modes to switch between. Our goal is to achieve SDOF of $\frac{2}{3}$ in total, which can be obtained by sending one data stream of confidential messages to each receiver in three time slots. To guarantee secrecy, one stream of artificial noise is also sent to protect the confidential messages at the unintended receiver. By antenna switching, we have the following channel coherence structure in three time slots: $\mathbf{H}_{1} = \textrm{diag} \{1,2,1\}$ and $\mathbf{H}_{2} = \textrm{diag} \{1,1,2\}$ \footnote{The abbreviation $\mathbf{H}_{k} = \textrm{diag} \{m_{1},m_{2}, \dots, m_{t}, \dots m_{T}\}$ denotes a block diagonal matrix with the diagonal element $m_{t}$ replaced by $1 \times M$ vectors $H_{k}(m_{t})$, $m_{t} \in \mathcal{M}$. Specifically, $\mathbf{H}_{1} = \textrm{diag} \{1,2,1\} \doteq \textrm{diag} \{H_{1}(1),H_{1}(2),H_{1}(1)\}$ and $\mathbf{H}_{2} = \textrm{diag} \{1,1,2\} \doteq \textrm{diag} \{H_{2}(1),H_{2}(1),H_{2}(2)\}$ where $H_{k}(m)$ is a $1 \times 2$ vector for $k,m \in \{1,2\}$. }.
$\mathbf{H}_{1}$ and $\mathbf{H}_{2}$ are both $3 \times 6$ matrices. These three time slots can be seen as a supersymbol for transmission. We design the $6 \times 2$ beamforming matrix for each receiver constructed using only the $2 \times 2$ identity matrix and zero matrix. In other words, the beamforming matrices do not rely on the value of channel states. Let $\boldsymbol{\Phi}^{[k]}$ denote the beamforming matrix for receiver $k$, $k \in \{1,2\}$. The transmitted signal can be represented as
\begin{align} \nonumber
    X & = \textrm{vec}([X(1)~ X(2)~ X(3)]) \\ \nonumber
    & = \underbrace{\left[\begin{array}{c} \mathbf{I} \\ \mathbf{I} \\ \mathbf{0} \end{array}\right]}_{\boldsymbol{\Phi}^{[1]}} \left[\begin{array}{c} \mu_{1} \\ \nu_{1} \end{array}\right] + \underbrace{\left[\begin{array}{c} \mathbf{I} \\ \mathbf{0} \\ \mathbf{I} \end{array}\right]}_{\boldsymbol{\Phi}^{[2]}} \left[\begin{array}{c} \mu_{2} \\ \nu_{2} \end{array}\right],
\end{align}
where $\mu_{1}$ and $\mu_{2}$ are two independent coded confidential data streams intended for receiver $1$ and $2$, respectively, and $\nu_{1}$ and $\nu_{2}$ are two independent Gaussian artificial noise streams. We assume the powers of the streams scale equally with $P$. The received signal at receiver $1$ is
\begin{align}\nonumber
    Y_{1} & = \left[\begin{array}{c} y_{1}(1) \\ y_{1}(2) \\ y_{1}(3) \end{array}\right] = \mathbf{H}_{1} X + N_{1} = \underbrace{\left[\begin{array}{c} H_{1}(1) \\ H_{1}(2) \\ \mathbf{0} \end{array}\right]}_{\textrm{rank} = 2} \left[\begin{array}{c} \mu_{1} \\ \nu_{1} \end{array}\right] \\ \nonumber 
    & ~~ + \underbrace{\left[\begin{array}{c} H_{1}(1) \\ \mathbf{0} \\ H_{1}(1) \end{array}\right]}_{\textrm{rank} = 1} \left[\begin{array}{c} \mu_{2} \\ \nu_{2} \end{array}\right] + \left[\begin{array}{c} n_{1}(1) \\ n_{1}(2) \\ n_{1}(3) \end{array}\right].
\end{align}
We can see that $\mu_{1}$ and $\nu_{1}$ appear through a rank $2$ matrix, while interference $\mu_{2}$ and $\nu_{2}$ are aligned within a the subspace with rank $1$. It is clear that the signal space and the interference space are orthogonal to each other. Therefore, the confidential data stream $\mu_{1}$ can be resolved from the received signal to obtain $1$ DOF. Moreover, because in the $1$-D interference subspace, $\mu_{2}$ and $\nu_{2}$ are coupled with the same power level, the artificial noise $\nu_{2}$ can protect the confidential data at receiver $1$. Similarly, at receiver $2$, $\mu_{2}$ can achieve $1$ DOF. Therefore, by normalizing the transmission time slots, each receiver achieve DOF of $\frac{1}{3}$. After collecting the whole codeword, we have the equivocation at receiver $1$ as follows, for $\mathcal{S} = \{2\}$,
\begin{align} \nonumber
    \Delta_{\mathcal{S}}^{[1]} 
        & = 1 - \frac{I(W_{2}; Y_{1}^{n}|\mathbf{H}^{n})}{nR_{2}} \\ \nonumber 
        & \geq 1 - \frac{\frac{n}{3}\left(I(\mu_{2}, \nu_{2}; Y_{1}|H_{\mathcal{K}}) - I(\nu_{2}; Y_{1} | \mu_{2}, H_{\mathcal{K}}) \right)}{nR_{2}} \\ \nonumber 
        & = 1 - \frac{o(\textrm{log}(P))}{\textrm{log}(P) + o(\textrm{log}(P))}
\end{align}
where $\boldsymbol{\mu}_{2}$ and $\boldsymbol{\nu}_{2}$ represent the sequences of $\mu_{2}$ and $\nu_{2}$, respectively, over the codeword length $n$; $H_{\mathcal{K}}$ is the channel matrix for the supersymbol. It is clear that we can show
$\lim_{n,P\rightarrow \infty} \Delta_{\mathcal{S}}^{[1]} = 1$,
to guarantee the confidentiality in the limit of large $n$, and $P$. A similar analysis can be adopted at receiver $2$. Finally, we show that SDOF of $\frac{2}{3}$ is achieved.

\begin{remark} \rm
The main idea of keeping the confidentiality at eavesdroppers is to let the artificial noise fill the interference subspace. When $M > 2$, we will carry out dimension-extension in transmission schemes, which leads to the dimension-expansion on the interference space. Consequently, the key step is to maintain the presence of artificial noise in every dimension of the aligned interference space.
\end{remark}

\subsection{$K=2$, $M=3$} 
We aim to show that SDOF of $\frac{2\times(3-1)}{3+2-1} = 1$ can be achieved. We consider an $8$-slot transmission as a supersymbol, during which four streams of confidential messages can be delivered to each receiver and two streams of artificial noise are used for each receiver. We shall introduce additional unitary beamforming matrices $\mathbf{V}^{[k]}$ in the transmitted signals, with the purpose of maintaining the existence of artificial noise in the interference space. We will show in the sequel that the elements in $\mathbf{V}^{[k]}$ are either $1$ or $0$. Thus they are independent of the value of the channel coefficients. The transmitted signal in a supersymbol for each receiver is designed as follows, respectively:
\begin{align} \nonumber
    X_{1} & = \underbrace{\left[\begin{array}{c c} \mathbf{I} & \mathbf{0} \\ \mathbf{I} & \mathbf{0} \\ \mathbf{I} & \mathbf{0} \\ \mathbf{0} & \mathbf{I} \\ \mathbf{0} & \mathbf{I} \\ \mathbf{0} & \mathbf{I} \\ \mathbf{0} & \mathbf{0} \\ \mathbf{0} & \mathbf{0} \end{array}\right]}_{\boldsymbol{\Phi}^{[1]}} \underbrace{\left[\begin{array}{c c} \mathbf{V}^{[1]}_{\mu} & \mathbf{V}^{[1]}_{\nu} \end{array}\right]}_{\mathbf{V}^{[1]}} \left[\begin{array}{c} \boldsymbol{\mu}_{1} \\ \boldsymbol{\nu}_{1} \end{array} \right], \\ \nonumber 
    X_{2} & = \underbrace{\left[\begin{array}{c c} \mathbf{I} & \mathbf{0} \\ \mathbf{0} & \mathbf{I} \\ \mathbf{0} & \mathbf{0} \\ \mathbf{I} & \mathbf{0} \\ \mathbf{0} & \mathbf{I} \\ \mathbf{0} & \mathbf{0} \\ \mathbf{I} & \mathbf{0} \\ \mathbf{0} & \mathbf{I} \end{array}\right]}_{\boldsymbol{\Phi}^{[2]}} \underbrace{\left[\begin{array}{c c} \mathbf{V}^{[2]}_{\mu} & \mathbf{V}^{[2]}_{\nu} \end{array}\right]}_{\mathbf{V}^{[2]}} \left[\begin{array}{c} \boldsymbol{\mu}_{2} \\ \boldsymbol{\nu}_{2} \end{array} \right],
\end{align}
where $\boldsymbol{\mu}_{k}$ and $\boldsymbol{\nu}_{k}$ are $4 \times 1$ and $2 \times 1$ vectors, respectively, denoting the confidential data streams and artificial noise. The block matrices $\mathbf{V}^{[k]}_{\mu}$ and $\mathbf{V}^{[k]}_{\mu}$ have dimensions $6 \times 4$ and $6 \times 2$, respectively. $\mathbf{I}$ and $\mathbf{0}$ are $3 \times 3$ identity and zero matrices, respectively. By antenna switching, we are able to have the corresponding channel matrices as $\mathbf{H}_{1} = \textrm{diag}\{1,2,3,1,2,3,1,2\}$ and $\mathbf{H}_{2} = \textrm{diag}\{1,1,1,2,2,2,3,3\}$, each representing an $8 \times 24$ matrix. Let the transmitted signal $X = X_{1} + X_{2}$. The received signals over all $8$ slots at receiver $1$ are
\begin{align} \nonumber
	Y_{1} & = \mathbf{H}_{1} X + N_{1} = \mathbf{H}_{1} (X_{1} + X_{2}) + N_{1} \\ \nonumber
          & = \underbrace{\left[\begin{array}{c c} H_{1}(1) & 0 \\ H_{1}(2) & 0 \\ H_{1}(3) & 0 \\ 0 & H_{1}(1) \\ 0 & H_{1}(2) \\ 0 & H_{1}(3) \\ 0 & 0 \\ 0 & 0 \end{array}\right]}_{\textrm{rank} = 6} \left[\begin{array}{c c} \mathbf{V}_{\mu}^{[1]} & \mathbf{V}_{\nu}^{[1]} \end{array}\right] \left[\begin{array}{c} \boldsymbol{\mu}_{1} \\ \boldsymbol{\nu}_{1} \end{array}\right] \\ \nonumber 
          & ~ +\underbrace{\left[\begin{array}{c c} H_{1}(1) & 0 \\ 0 & H_{1}(2) \\ 0 & 0 \\ H_{1}(1) & 0 \\ 0 & H_{1}(2) \\ 0 & 0 \\ H_{1}(1) & 0 \\ 0 & H_{1}(2) \end{array}\right]}_{\textrm{rank} = 2} \left[\begin{array}{c c} \mathbf{V}_{\mu}^{[2]} & \mathbf{V}_{\nu}^{[2]} \end{array}\right] \left[\begin{array}{c} \boldsymbol{\mu}_{2} \\ \boldsymbol{\nu}_{2} \end{array}\right] + N_{1}
\end{align}
where $H_{k}(m)$ and $0$ are $1 \times 3$ vectors. We observe from the signal space that $6$ streams including four data streams in $\boldsymbol{\mu}_{1}$ and two artificial noise streams in $\boldsymbol{\nu}_{1}$ can be resolved almost surely in the $6$-D space. After the alignment, the dimension of the interference space has been reduced to two almost surely. As mentioned, for protecting the confidential messages, we aim to fill the whole interference space with artificial noise, which means the following statement should hold almost surely (a.s.):
\begin{align} \nonumber
    \textrm{rank} \left\{\left[\begin{array}{c c} H_{1}(1) & 0 \\ 0 & H_{1}(2) \end{array}\right] \mathbf{V}_{\nu}^{[2]} \right\} = 2.
\end{align}
One solution for $\mathbf{V}_{\nu}^{[2]}$ can be $\left[\begin{array}{c c c c c c} 1 & 0 & 0 & 0 & 0 & 0 \\ 0 & 0 & 0 & 1 & 0 & 0 \end{array}\right]^{T}$, which yields
\begin{align} \label{Eqn:FullRank}
    \textrm{rank} \left\{\left[\begin{array}{c c} h_{1}(1,1) & 0 \\ 0 & h_{1}(2,1) \end{array} \right]\right\} = 2,
\end{align}
where $h_{1}(1,1)$ and $h_{1}(2,1)$ represent the first elements in the channel vectors $H_{1}(1)$ and $H_{1}(2)$, respectively. It is clear that (\ref{Eqn:FullRank}) holds almost surely. Therefore, $\mathbf{V}^{[2]}$ can be chosen as the elementary matrix,
\begin{align} \label{Eqn:EleMatrix}
    \mathbf{V}^{[2]} = \left[\begin{array}{c c} \mathbf{V}_{\mu}^{[2]} & \mathbf{V}_{\nu}^{[2]} \end{array}\right] = \left[\begin{array}{c c c c} 0 & 0 & 0 & 0 \\ 0 & 1 & 0 & 0 \\ 0 & 0 & 1 & 0 \\ 0 & 0 & 0 & 0 \\ 1 & 0 & 0 & 0 \\ 0 & 0 & 0 & 1\end{array} \vline \begin{array}{c c} 1 & 0 \\ 0 & 0 \\ 0 & 0 \\ 0 & 1 \\ 0 & 0 \\ 0 & 0 \end{array}\right].
\end{align}
By this means, we guarantee the artificial noise can fill the whole interference space. The equivocation $\Delta_{\mathcal{S},1}$ can be shown to have the limit $1$ when $n, P \rightarrow \infty$. A similar argument can be made at receiver $2$, with $\mathbf{V}^{[1]}$ chosen as the same elementary matrix (\ref{Eqn:EleMatrix}) to keep the presence of artificial noise. Therefore, after the alignment, four streams of confidential messages are delivered for each receiver over $8$ slots to achieve SDOF of $1$.

\begin{remark} \rm
We note that the channel coherence structure provides the secret key for receivers to decode. In other words, if the private antenna switching function were known by an eavesdropper, then the eavesdropper could switch its antenna accordingly to mimic the channel coherence structure of the receiver. In this way, the messages could be decoded by the eavesdropper.
\end{remark}

\begin{theorem}
\label{Theorem:SDOFBlind}
For the $M \times K$ XNCM with reconfigurable antennas, the optimal sum SDOF can be bounded as $\frac{K(M-1)}{K+M-1} \leq d \leq \frac{K(M-1)}{K+M-2}$.	
\end{theorem}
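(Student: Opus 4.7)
The plan splits naturally into the converse and the achievability parts.

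For the upper bound $d \leq \frac{K(M-1)}{K+M-2}$, I would simply invoke \emph{Theorem} \ref{theorem:SDOFUB}. That bound was derived without any assumption on CSIT or on the channel fading model; in particular, the reconfigurable-antenna channel (where each coefficient $h_{km}(\hat{m}_{k}(t))$ is drawn from a finite predefined set according to the switching pattern) is just a special instance of a time-varying channel whose coefficients are bounded away from zero and infinity. The remark immediately after \emph{Theorem} \ref{theorem:SDOFUB} already records that it applies to the no-CSIT setting, so no additional work is needed.

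For the lower bound $d \geq \frac{K(M-1)}{K+M-1}$, the plan is to lift the blind ANA construction of \emph{Lemma} \ref{lemma:SDOFBCCM} from the $K$-user $M \times 1$ BCC to the $M \times K$ XNCM. The key observation — flagged just before \emph{Definition} \ref{definition:BCCM} — is that the blind ANA scheme for the BCC was deliberately designed without any antenna cooperation: in the supersymbol formulation $X = \sum_{k \in \mathcal{K}} \boldsymbol{\Phi}^{[k]} \mathbf{V}^{[k]} [\boldsymbol{\mu}_{k}; \boldsymbol{\nu}_{k}]$, the matrices $\boldsymbol{\Phi}^{[k]}$ and $\mathbf{V}^{[k]}$ are built from zero and identity blocks, so the signal emitted on the $m$-th spatial component at each time slot depends only on a disjoint subset of the symbol streams. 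This is precisely what the $M=2$ and $M=3$ examples in Section \ref{section:SDOFBlind} illustrate, where one antenna ends up carrying only message streams and another only AN streams.

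Given this property, I would emulate the BCC scheme in the XNCM by identifying the $m$-th single-antenna transmitter with the $m$-th spatial component of the virtual multi-antenna transmitter, and letting transmitter $m$ emit exactly that component. Message streams routed through antenna $m$ toward receiver $k$ are relabeled as $W_{k,m}$ in the XNCM; remaining messages $W_{k,m}$ (for index pairs that the BCC scheme does not populate) are given rate zero, for which \eqref{Eqn:confidentiality} is automatic. Since each receiver observes the same linear combination of transmitted signals as in the BCC, the reliability argument and the equivocation analysis of \emph{Lemma} \ref{lemma:SDOFBCCM} transfer verbatim; the sum rate across all $(k,m)$ pairs matches the BCC sum rate $\frac{K(M-1)}{K+M-1}\log(P) + o(\log(P))$, yielding the desired sum SDOF.

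The main obstacle I anticipate is formally verifying, for arbitrary $M$ and $K$, that the general construction of $\boldsymbol{\Phi}^{[k]}$ and $\mathbf{V}^{[k]}$ in the proof of \emph{Lemma} \ref{lemma:SDOFBCCM} preserves this non-cooperative structure — i.e.\ that no entry of the $M$-dimensional per-slot transmit vector is an algebraic combination of symbols that would have to be physically colocated. Given that the entire construction is built from zero/identity blocks and elementary row-selection matrices (as the two worked examples confirm), this verification is structural rather than computational, and is the one place where one must be careful before declaring that the BCC-to-XNCM translation is lossless.
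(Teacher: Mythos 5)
Your proposal is correct and follows essentially the same route as the paper: the converse is inherited directly from \emph{Theorem}~\ref{theorem:SDOFUB}, and the achievability is obtained by observing that the blind ANA scheme of \emph{Lemma}~\ref{lemma:SDOFBCCM} involves no antenna cooperation at the transmitter, so the $M$ spatial components can be distributed to $M$ single-antenna transmitters with the message set relabeled accordingly (the paper states this in two sentences, while you additionally flag the structural verification of the zero/identity block construction that the paper leaves implicit).
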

\begin{proof}
The converse follows directly from \emph{Theorem} \ref{theorem:SDOFUB}. The achievability follows from the proof of \emph{Lemma} \ref{lemma:SDOFBCCM}, where no antenna cooperation is carried out at transmitters. Therefore, by rearranging the message set, we can obtain the same SDOF. 
\end{proof}

\section{Conclusions}
\label{section:CONC}

In this paper, we have studied the achievable and optimal SDOF of wireless X networks with confidential messages. In particular, we have proposed an SDOF upper bound for the $M \times K$ XNCM. This upper bound has been shown to be tight for $K=2$, with time/frequency varying channels. The achievability of this bound was shown by an ANA scheme, in which artificial noise has been injected into the interference space at receivers. The proposed scheme can be generalized to the $M \times K$ XNCM with time/frequency varying channels for $K \geq 3$, even when an external eavesdropper exists. The achieved SDOF lower bound approaches the upper bound asymptotically with large $M$ and/or $K$. Finally, we have generalized the ANA scheme to the blind case, in which CSIT is not necessarily needed but the channel coherence structure  is known to the transmitters. It is interesting to note that by switching antenna modes artificially, we can not only obtain the intended channel coherence structure for IA but also the secret key for decoding. The achieved SDOF is also asymptotically optimal as the number of nodes in the network approaches infinity. By restricting to linear operations, we have offered a different approach to secrecy coding and interference alignment instead of random binning.

\appendices

\section{Proof of Lemma \ref{lemma:MacBound}}
\label{Appendix:Proof:LemmaMac}
We start from Fano's inequality
\begin{align}
	\nonumber
	& ~~ n \sum_{j\in \mathcal{M}-p} R_{\hat{k},j} = H(\mathbf{W}_{\hat{k},\mathcal{M}-p}) \\ \nonumber
		& \leq I(\mathbf{X}_{\mathcal{M}-p};\mathbf{Y}_{\hat{k}}) + n\epsilon_{1} \\ \nonumber
		& = h(\mathbf{Y}_{\hat{k}}) - h(\mathbf{Y}_{\hat{k}}|\mathbf{X}_{\mathcal{M}-p}) + n \epsilon_{1} \\ \nonumber
		& = h(\mathbf{Y}_{\hat{k}}) - h(\mathbf{H}_{\hat{k}p}X_{p}^{n} + N_{\hat{k}}^{n}|\mathbf{X}_{\mathcal{M}-p}) + n \epsilon_{1} \\ \nonumber
		& = h(\mathbf{Y}_{\hat{k}}) - h(\mathbf{H}_{\hat{k}p}X_{p}^{n} + \mathbf{H}_{\hat{k}p}\tilde{N}^{n} + \tilde{\tilde{N}}^{n})+ n \epsilon_{1} \\ \nonumber
		& \leq h(\mathbf{Y}_{\hat{k}}) - h(\mathbf{H}_{\hat{k}p}X_{p}^{n} + \mathbf{H}_{\hat{k}p}\tilde{N}^{n} + \tilde{\tilde{N}}^{n} | \tilde{\tilde{N}}^{n})+ n \epsilon_{1} \\ \nonumber
		& = h(\mathbf{Y}_{\hat{k}}) - h(X_{p}^{n} + \tilde{N}^{n}) + nO(1),
\end{align}
for some $\epsilon_{1} > 0$ and $\epsilon_{1}$ tends to zero as $n \rightarrow \infty$. $\mathbf{H}_{\hat{k}p} = \textrm{diag}\{h_{\hat{k}p}(1), h_{\hat{k}p}(2), \dots, h_{\hat{k}p}(n)\}$ denotes the channel matrix over $n$ time slots, $X_{p}^{n}$ denotes the $n \times 1$ channel input at transmitter $p$, and $N_{\hat{k}}^{n}$ is the $n \times 1$ noise vector at receiver $\hat{k}$. We note that the third equality follows by defining $N_{\hat{k}}^{n} = \mathbf{H}_{\hat{k}p}\tilde{N}^{n} + \tilde{\tilde{N}}^{n}$ according to the infinite divisibility of the Gaussian distribution, where $\tilde{N}^{n}$ and $\tilde{\tilde{N}}^{n}$ are independent Gaussian processes ($n \times 1$ vectors). It is observed that the variance of the entry $\tilde{n}(t)$ ($t=[1:n]$) in $\tilde{N}^{n}$ is smaller than $\frac{1}{|h_{\hat{k}p}(t)|^{2}}$.

\section{The Converse Proof of Lemma \ref{lemma:SDOFXNCMEE}}
\label{Appendix:Proof:UB_XNCMEE}

In this part we aim to show that the sum SDOF of the $M \times K$ wireless X network with only an external eavesdropper is bounded above by $\frac{K(M-1)}{K+M-2+\frac{1}{M}}$. Therefore, this upperbound also serves as a upper bound for that of the $M \times K$ XNCM-EE. We note that the detailed proof follows a similar approach as the converse proof for the $K$-user interference channel with an external eavesdropper in \cite{Jianwei2012}. Starting from Fano's inequality, we have
\begin{align}
	\nonumber
	& ~~~ n \sum_{i \in \mathcal{K},j\in \mathcal{M}}R_{i,j} = H(\mathbf{W}_{\mathcal{K},\mathcal{M}}) \\ \nonumber
	& \leq I(\mathbf{W}_{\mathcal{K,M}};\mathbf{Y}_{\mathcal{K}}) - I(\mathbf{W}_{\mathcal{K,M}};\mathbf{Y}_{e}) + n\epsilon_{1}\\ \nonumber
	& \leq I(\mathbf{W}_{\mathcal{K,M}};\mathbf{Y}_{\mathcal{K}},\mathbf{Y}_{e}) - I(\mathbf{W}_{\mathcal{K,M}};\mathbf{Y}_{e}) +n\epsilon_{1} \\ \nonumber
	& = I(\mathbf{W}_{\mathcal{K,M}};\mathbf{Y}_{\mathcal{K}}|\mathbf{Y}_{e}) + n\epsilon_{1} \leq I(\mathbf{X}_{\mathcal{M}};\mathbf{Y}_{\mathcal{K}}|\mathbf{Y}_{e}) + n\epsilon_{1} \\ \nonumber
	& = h(\mathbf{Y}_{\mathcal{K}}|\mathbf{Y}_{e}) - h(\mathbf{Y}_{\mathcal{K}}|\mathbf{Y}_{e},\mathbf{X}_{\mathcal{M}}) + n\epsilon_{1} \\ \nonumber
	& = h(\mathbf{Y}_{\mathcal{K}}|\mathbf{Y}_{e}) + no(\textrm{log}(P)) \\ \nonumber
	& = h(\mathbf{Y}_{\mathcal{K}},\mathbf{Y}_{e}) - h(\mathbf{Y}_{e}) + no(\textrm{log}(P)) \\ \nonumber
	& = h(\tilde{\mathbf{X}}_{\mathcal{M}},\mathbf{Y}_{\mathcal{K}},\mathbf{Y}_{e}) - h(\tilde{\mathbf{X}}_{\mathcal{M}}|\mathbf{Y}_{\mathcal{K}},\mathbf{Y}_{e}) - h(\mathbf{Y}_{e}) + no(\textrm{log}(P))\\ \nonumber
	& \leq h(\tilde{\mathbf{X}}_{\mathcal{M}},\mathbf{Y}_{\mathcal{K}},\mathbf{Y}_{e}) \!-\! h(\tilde{\mathbf{X}}_{\mathcal{M}}|\mathbf{Y}_{\mathcal{K}},\mathbf{Y}_{e},\mathbf{X}_{\mathcal{M}}) \\ \nonumber
	& ~~~~~  \!-\! h(\mathbf{Y}_{e}) \!+\! no(\textrm{log}(P)) \\ \nonumber
	& = h(\tilde{\mathbf{X}}_{\mathcal{M}}) + h(\mathbf{Y}_{\mathcal{K}},\mathbf{Y}_{e}|\tilde{\mathbf{X}}_{\mathcal{M}}) - h(\mathbf{Y}_{e}) + no(\textrm{log}(P)) \\ 
	\label{Eqn:NoiseEntropy}
	& \leq h(\tilde{\mathbf{X}}_{\mathcal{M}}) - h(\mathbf{Y}_{e}) + no(\textrm{log}(P)) \\
	\label{Eqn:DiminishInput}
	& \leq h(\tilde{\mathbf{X}}_{\mathcal{M}-1}) + no(\textrm{log}(P)) \\ \nonumber
	& \leq \sum_{p \in \mathcal{M}-1} h(\tilde{\mathbf{X}}_{p}) + no(\textrm{log}(P)) \\ \nonumber
	& \leq \sum_{p \in \mathcal{M}-1} \left(h(\mathbf{Y}_{\hat{k}}) - n\sum_{j \in \mathcal{M}-p} R_{\hat{k},j} \right) + no(\textrm{log}(P)),
\end{align}
for $\epsilon_{1}= o(\textrm{log}(P))$, where $\tilde{\mathbf{X}}_{\mathcal{M}} = \{X_{i}^{n} + \tilde{N}_{i}^{n}\}_{i \in \mathcal{M}}$, and the entry $\tilde{n}_{i}(t)$ of $\tilde{N}_{i}^{n}$ has variance smaller than $\frac{1}{|h_{\hat{k}i}(t)|^{2}}$ ($t=[1:n]$). (\ref{Eqn:NoiseEntropy}) follows from the fact that by conditioning on the noisy input, only effective noise contributes at the outputs which have differential entropy bounded by $no(\textrm{log}(P))$. (\ref{Eqn:DiminishInput}) follows from the following inequalities:
\begin{align}
	\nonumber
    h(\mathbf{Y}_{e}) & = h(\sum_{i=1}^{M}\mathbf{H}_{ei}X_{i}^{n}+N_{e}^{n}) \\ \nonumber
	& \geq h(\mathbf{H}_{e1}X_{1}^{n} + N_{e}^{n}) = h(\tilde{X}^{n}_{1}) + no(\textrm{log}(P)),
\end{align}
where $\mathbf{H}_{ei}$ is the $n \times n$ channel matrix and $X_{i}^{n}$ is the $n \times 1$ transmitted signal vector. 

Manipulating both sides, we have
\begin{align}
	\nonumber
	& (K+M-1)H(\mathbf{W}_{\mathcal{K},\mathcal{M}}) - H(\mathbf{W}_{\mathcal{K},\mathcal{M}-1}) \\ \nonumber 
	& ~~~~~~\leq (M-1)\sum_{\hat{k} \in \mathcal{K}} h(\mathbf{Y}_{\hat{k}}). 
\end{align}
Henceforth, we have the sum SDOF $d \leq \frac{K(M-1)}{K+M-2+ \frac{1}{M}} < \frac{K(M-1)}{K+M-2}$.

\section{A Preliminary Lemma from Linear Algebra}
\label{Appendix:Proof:LemmaRank}

\begin{lemma} \label{Theorem:lemma:RankInc}
For matrices $\mathbf{A} \in \mathbb{C}^{M \times N}$, $\mathbf{B}_{i} \in \mathbb{C}^{M \times T_{i}}$, with $M \geq N \geq T_{i}$ and $i \in \{1,2,\dots,I\}$, if $\textrm{span}(\mathbf{B}_{i}) \subseteq \textrm{span}(\mathbf{A})$, then we have
\begin{align} \nonumber
    \textrm{rank}(\mathbf{A}\mathbf{A}^{H} + \sum_{i=1}^{I} \lambda_{i} \mathbf{B}_{i}\mathbf{B}_{i}^{H}) = \textrm{rank}(\mathbf{A}\mathbf{A}^{H}),~~ \textrm{for}~\lambda_{i} \geq 0.
\end{align}
\end{lemma}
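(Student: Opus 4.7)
The plan is to reduce everything to a factorization that pulls $\mathbf{A}$ out on both sides. Since $\textrm{span}(\mathbf{B}_{i}) \subseteq \textrm{span}(\mathbf{A})$ for every $i$, each column of $\mathbf{B}_{i}$ lies in the column span of $\mathbf{A}$, so there exists $\mathbf{C}_{i} \in \mathbb{C}^{N \times T_{i}}$ with $\mathbf{B}_{i} = \mathbf{A}\mathbf{C}_{i}$. Substituting this gives
\begin{align} \nonumber
\mathbf{A}\mathbf{A}^{H} + \sum_{i=1}^{I} \lambda_{i}\mathbf{B}_{i}\mathbf{B}_{i}^{H} = \mathbf{A}\!\left(\mathbf{I}_{N} + \sum_{i=1}^{I}\lambda_{i}\mathbf{C}_{i}\mathbf{C}_{i}^{H}\right)\!\mathbf{A}^{H} = \mathbf{A}\mathbf{D}\mathbf{A}^{H},
\end{align}
where $\mathbf{D} \overset{\triangle}{=} \mathbf{I}_{N} + \sum_{i}\lambda_{i}\mathbf{C}_{i}\mathbf{C}_{i}^{H}$.

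Next, I would observe that $\mathbf{D}$ is Hermitian and strictly positive definite: each summand $\lambda_{i}\mathbf{C}_{i}\mathbf{C}_{i}^{H}$ is positive semidefinite because $\lambda_{i} \geq 0$, and adding the identity $\mathbf{I}_{N}$ guarantees a strictly positive smallest eigenvalue. Hence $\mathbf{D}$ is invertible and admits a Hermitian square root $\mathbf{D}^{1/2}$ that is itself invertible. This lets me rewrite the sum as $(\mathbf{A}\mathbf{D}^{1/2})(\mathbf{A}\mathbf{D}^{1/2})^{H}$.

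The last step is the standard identity $\textrm{rank}(\mathbf{X}\mathbf{X}^{H}) = \textrm{rank}(\mathbf{X})$ together with the fact that multiplication by an invertible matrix preserves rank:
\begin{align} \nonumber
\textrm{rank}\!\left(\mathbf{A}\mathbf{A}^{H} + \sum_{i=1}^{I}\lambda_{i}\mathbf{B}_{i}\mathbf{B}_{i}^{H}\right) = \textrm{rank}(\mathbf{A}\mathbf{D}^{1/2}) = \textrm{rank}(\mathbf{A}) = \textrm{rank}(\mathbf{A}\mathbf{A}^{H}).
\end{align}

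There is no real obstacle here; the only point that needs a careful sentence is justifying the factorization $\mathbf{B}_{i} = \mathbf{A}\mathbf{C}_{i}$, which is immediate from the span inclusion hypothesis (each column of $\mathbf{B}_{i}$ is some linear combination of the columns of $\mathbf{A}$, and the coefficients form the columns of $\mathbf{C}_{i}$). The positive definiteness of $\mathbf{D}$ is what makes the identity $\mathbf{I}_{N}$ in the definition of $\mathbf{D}$ essential: without it one could only conclude $\geq$ in rank, but with it the perturbation term cannot contract the column space of $\mathbf{A}$, which is exactly the content of the lemma and what is needed in the achievability proof to ensure that the artificial noise occupies every dimension of the received signal's interference subspace.
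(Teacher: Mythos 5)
Your proof is correct, and it takes a genuinely different route from the one in the paper. The paper's argument is two-sided: it applies the compact SVD to $\mathbf{A}$ and each $\mathbf{B}_{i}$, shows that every column of $\mathbf{A}\mathbf{A}^{H} + \sum_{i}\lambda_{i}\mathbf{B}_{i}\mathbf{B}_{i}^{H}$ is a linear combination of the $d = \textrm{rank}(\mathbf{A})$ left singular vectors of $\mathbf{A}$ (giving the upper bound $\leq d$), and then invokes the fact that a sum of positive semidefinite matrices cannot have smaller rank than any summand (giving the lower bound $\geq d$). Your argument instead factors $\mathbf{B}_{i} = \mathbf{A}\mathbf{C}_{i}$, collapses the whole sum into the single Gram form $(\mathbf{A}\mathbf{D}^{1/2})(\mathbf{A}\mathbf{D}^{1/2})^{H}$ with $\mathbf{D} = \mathbf{I}_{N} + \sum_{i}\lambda_{i}\mathbf{C}_{i}\mathbf{C}_{i}^{H} \succ 0$, and reads off the rank in one step. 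What your route buys is a one-shot equality with no separate lower-bound step, and it isolates exactly why the perturbation cannot contract the column space: the inner matrix $\mathbf{D}$ is nonsingular precisely because the $\mathbf{A}\mathbf{A}^{H}$ term contributes the identity. What the paper's route buys is that it avoids the matrix square root entirely, working only with linear combinations of singular vectors and the rank monotonicity of sums of Hermitian positive semidefinite matrices. One small remark: the paper's lower-bound step is stated as following because the summands are \emph{Hermitian}, but positive semidefiniteness is what is actually needed there (Hermitian alone would allow cancellation); your argument sidesteps this subtlety altogether.
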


\begin{proof}
Let us assume that $\mathbf{A}$ has rank $d$, where $d \leq N$. Apply the compact singular value decomposition (SVD) on $\mathbf{A}$ and $\mathbf{B}_{i}$. Let $\mathbf{U} = [U_{1}, U_{2}, \dots, U_{d}] \in \mathbb{C}^{M \times d}$ and $\mathbf{\Gamma}_{i} = [\Gamma_{1}, \Gamma_{2}, \dots, \Gamma_{d_{i}}] \in \mathbf{C}^{M \times d_{i}}$ be the left singular vector matrices of $\mathbf{A}$ and $\mathbf{B}_{i}$, respectively, corresponding to the nonzero singular values. We have $d_{i} \leq d$ because of $\textrm{span}(\mathbf{B}_{i}) \subseteq \textrm{span}(\mathbf{A})$. Our goal is to show that the matrix $\mathbf{A}\mathbf{A}^{H} + \sum_{i=1}^{I} \lambda_{i}\mathbf{B}_{i}\mathbf{B}_{i}^{H}$ has at most $d$ linearly independent column vectors. Because $U_{j}$ are the eigenvectors of matrix $\mathbf{A}\mathbf{A}^{H}$, any column vector of $\mathbf{A}\mathbf{A}^{H}$ can be written as the linear combination of $U_{j}$, for $j \in [1:d]$. We also note that the $U_{j}$ form an basis for the column space of $\mathbf{A}$. We then show that any column vector of $\mathbf{B}_{i}\mathbf{B}_{i}^{H}$ can also be written as a linear combination of the $U_{j}$. Choose an arbitrary column vector in $\mathbf{B}_{i}\mathbf{B}_{i}^{H}$, denoted as $Q$. We have $Q = \sum_{k=1}^{d_{i}} \alpha_{k} \Gamma_{k}$. Because $\textrm{span}(\mathbf{B}_{i}) \subseteq \textrm{span}(\mathbf{A})$, $\Gamma_{k} = \sum_{j=1}^{d} \beta_{j} U_{j}$ which shows that any left singular vector of $\mathbf{B}_{i}$ can be written as a linear combination of $U_{j}$ (the basis of $\textrm{span}(\mathbf{A})$). Thus, $Q = \sum_{k=1}^{d_{i}} \alpha_{k} (\sum_{j=1}^{d} \beta_{j} U_{j})$. This shows that any column vector of $\mathbf{A}\mathbf{A}^{H} + \sum_{i=1}^{I} \lambda_{i}\mathbf{B}_{i}\mathbf{B}_{i}^{H}$ can be written as a linear combination of $d$ independent vectors, and thus the rank of of $\mathbf{A}\mathbf{A}^{H} + \sum_{i=1}^{I} \mathbf{B}_{i}\mathbf{B}_{i}^{H}$ is at most $d$.

On the other hand, because $\mathbf{A}\mathbf{A}^{H}$ and $\lambda_{i} \mathbf{B}_{i}\mathbf{B}_{i}^{H}$ are \emph{Hermitian} matrices, it is clear
\begin{align} \nonumber
	\textrm{rank}(\mathbf{A}\mathbf{A}^{H} + \sum_{i=1}^{I} \lambda_{i}\mathbf{B}_{i}\mathbf{B}_{i}^{H}) \geq \textrm{rank}(\mathbf{A}\mathbf{A}^{H}) = d.
\end{align}
Therefore, we conclude the rank of matrix $\mathbf{A}\mathbf{A}^{H} + \sum_{i=1}^{I} \lambda_{i}\mathbf{B}_{i}\mathbf{B}_{i}^{H}$ is $d$ to complete the proof.
\end{proof}

Lemma \ref{Theorem:lemma:RankInc} provides the fundamentals to analyze the dimension of the aligned subspace in our achievable schemes. For instance, if the interference space $\textrm{span}(\mathbf{B}_{i})$ is aligned into the subspace $\textrm{span}(\mathbf{A})$, then by Lemma \ref{Theorem:lemma:RankInc}, we know the aligned subspace is dominated by $\mathbf{A}$.

\section{Proof of Lemma \ref{lemma:SDOFBCCM}}
\label{Appendix:Proof:TheoremKBCC}

For the $K$-user BCC with $M$ reconfigurable modes for the antenna at each receiver, we aim to show that SDOF of $\frac{K(M-1)}{M+K-1}$ can be achieved almost surely. Let $\alpha = (M-1)^{K-1}$ and $\beta=(K+M-1)\alpha$. Consider an $\alpha$ symbol extension over the original channel, and construct the supersymbol in $\beta$ time slots. Then, it is equivalent to show that $(M-1)\alpha$ confidential streams can be delivered to each receiver in $\beta$ time slots, which yields $d = \frac{(M-1)\alpha}{\beta} = \frac{(M-1)^{K}}{(K+M-1)(M-1)^{K-1}} = \frac{M-1}{M+K-1}$ achieved for each receiver. The transmitted signal for the receiver $k \in \mathcal{K}$ can be designed as
\begin{align} \nonumber
	X_{k} & = \textrm{vec}[X_{k}(1)~ X_{k}(2)~ \dots X_{k}(\beta)] \\ \nonumber 
	& = \boldsymbol{\Phi}^{[k]} \underbrace{\left[\mathbf{V}_{\boldsymbol{\mu}}^{[k]}~ \mathbf{V}_{\boldsymbol{\nu}}^{[k]}\right]}_{\mathbf{V}^{[k]}} \left[\begin{array}{c} \boldsymbol{\mu}_{k} \\ \boldsymbol{\nu}_{k} \end{array}\right],
\end{align}
where $\boldsymbol{\mu}_{k}$ is the $(M-1)\alpha \times 1$ confidential symbol vector intended for receiver $k$, and $\boldsymbol{\nu}_{k}$ is the $\alpha \times 1$ artificial noise stream. Thus the total dimension of the signal vector is $M\alpha$. $\boldsymbol{\Phi}^{[k]}$ is the $M\beta \times M\alpha$ beamforming matrix, $\mathbf{V}^{[k]}$ is a unitary matrix with dimensions $M\alpha \times M\alpha$, in which the block matrix $\mathbf{V}_{\boldsymbol{\nu}}^{[k]} \in \mathbb{C}^{M\alpha \times \alpha}$ is designed to make sure that the artificial noise can fill the whole interference space almost surely. We adopt the same antenna switching mode as proposed in \cite{Tiangao2011}, which gives the specific pattern for each $\mathbf{H}_{k}$ with dimensions $\beta \times M\beta$. For simplicity, we omit the details of $\mathbf{H}_{k}$ and $\boldsymbol{\Phi}^{[k]}$ (please refer to \cite{Tiangao2011}), and instead the emphasis is placed on the submatrix $\mathbf{V}_{\boldsymbol{\nu}}^{[k]}$ for transmitting artificial noise.

Consider the received signal at receiver $k$, which is
\begin{align} \nonumber
	Y_{k} & = \mathbf{H}_{k}\sum_{j \in \mathcal{K}}X_{k} + N_{k} = \mathbf{H}_{k}\boldsymbol{\Phi}^{[k]}\mathbf{V}^{[k]}\left[\begin{array}{c} \boldsymbol{\mu}_{k} \\ \boldsymbol{\nu}_{k} \end{array}\right] \\ \nonumber 
	& ~~ + \sum_{j \in \mathcal{K}-k} \mathbf{H}_{k} \boldsymbol{\Phi}^{[j]} \mathbf{V}^{[j]} \left[\begin{array}{c} \boldsymbol{\mu}_{j} \\ \boldsymbol{\nu}_{j} \end{array}\right] + N_{k}.
\end{align}
We first set all $\mathbf{V}^{[k]}$ to be the same choice, such that $\mathbf{V}^{[k]} = \mathbf{V}$ for all $k \in \mathcal{K}$. Let $\mathbf{G}_{k} = \mathbf{H}_{k} \boldsymbol{\Phi}^{[k]} \mathbf{V}$, and $\mathbf{Q}_{j} = \mathbf{H}_{k} \boldsymbol{\Phi}^{[j]} \mathbf{V}$ for $j \in \mathcal{K}-k$. We have
\begin{align}
	Y_{k} = \mathbf{G}_{k} \left[\begin{array}{c} \boldsymbol{\mu}_{k} \\ \boldsymbol{\nu}_{k} \end{array}\right] + \sum_{j \in \mathcal{K}-k} \mathbf{Q}_{j} \left[\begin{array}{c} \boldsymbol{\mu}_{j} \\ \boldsymbol{\nu}_{j} \end{array}\right] + N_{k}.
\end{align}
By choosing $\mathbf{H}_{k}$ and $\boldsymbol{\Phi}^{[k]}$ ($k \in \mathcal{K}$) as in \cite{Tiangao2011}, it is shown that $\mathbf{H}_{k}\boldsymbol{\Phi}^{[k]}$ and $\mathbf{H}_{k}\boldsymbol{\Phi}^{[j]}$ ($j \in \mathcal{K}-k$) are all orthogonal. Because the isometry of the unitary matrix $\mathbf{V}$, which preserves the orthogonality and matrix rank, it is clear that $\mathbf{G}_{k}$ and $\mathbf{Q}_{j}$ ($j \in \mathcal{K}-k$) are all orthogonal. Moreover, $\mathbf{G}_{k}$ is shown to have rank $M\alpha$, which implies that $\boldsymbol{\mu}_{k}$ and $\boldsymbol{\nu}_{k}$ can be resolved almost surely. Therefore, $\boldsymbol{\mu}_{k}$ has rate $R_{k} = \frac{(M-1)\alpha}{\beta} \textrm{log}(P) + o(\textrm{log}(P))$. For the interference subspace, $\mathbf{Q}_{j}$ is shown to have rank $\alpha$. To guarantee that the artificial noise can fill the interference space, it suffices to show that
\begin{align} \nonumber
	\textrm{rank}\{\mathbf{H}_{k}\boldsymbol{\Phi}^{[j]}\mathbf{V}_{\boldsymbol{\nu}}\} = \alpha, ~ \textrm{a.s.}
\end{align}
which yields
\begin{align}
	\textrm{rank} \left\{\left[\begin{array}{c c c c} H'_{j}(1) & ~ & ~ & ~ \\ ~ & H'_{j}(2) & ~ & ~ \\ ~ & ~ & \ddots & ~ \\ ~ & ~ & ~ & H'_{j}(\alpha) \end{array}\right] \mathbf{V}_{\boldsymbol{\nu}}\right\} = \alpha, ~ \textrm{a.s.}
\end{align}
where for $t=[1:\alpha]$
\begin{align}
	H'_{j}(t) = \left\{ \begin{array}{l l} H_{j}(M) & \quad \textrm{if}~ t~ \textrm{mod}~ M = 0 \\ H_{j}(t~ \textrm{mod}~ M) & \quad \textrm{otherwise}. \end{array} \right.
\end{align}
We choose $\mathbf{V}_{\boldsymbol{\nu}} = \mathbf{I} \otimes \Theta$, with $\Theta = [1~0~\dots~0]^{T}$ denoting the $M \times 1$ elementary vector, and $\mathbf{I}$ denoting the identity matrix with dimension $\alpha \times \alpha$. We have
\begin{align}
	\textrm{rank}\left\{\textrm{diag}\{h'_{j}(1,1), h'_{j}(2,1), \dots, h'_{j}(\alpha,1)\} \right\} = \alpha, ~ \textrm{a.s.}
\end{align}
where $h'_{j}(i,1)$ represents the first element of the channel vector $H'_{j}(i)$. It is clear that the above statement holds. Thus, $\mathbf{V}$ can be chosen as an elementary matrix with the block $\mathbf{V}_{\boldsymbol{\nu}}^{[k]} = \mathbf{I} \otimes \Theta$. Let $\bar{\mathbf{Q}}_{j} = \mathbf{H}_{k} \boldsymbol{\Phi}^{[j]} \mathbf{V}_{\boldsymbol{\nu}}$ and $\mathcal{S} = \mathcal{K}-k$. We consider the information leakage at receiver $k$ after the whole codeword length $n$,
\begin{align} \nonumber
	& \frac{1}{n}I(\mathbf{W}_{\mathcal{S}}; Y_{k}^{n}, H^{n}) \leq \frac{1}{\beta} I(\boldsymbol{\mu}_{\mathcal{S}};Y_{k}|H_{\mathcal{K}}) \\ \nonumber 
	& ~~~~~= \frac{1}{\beta}\left(I(\boldsymbol{\nu}_{\mathcal{S}},\boldsymbol{\mu}_{\mathcal{S}};Y_{k}|H_{\mathcal{K}}) - I(\boldsymbol{\nu}_{\mathcal{S}};Y_{k}|\boldsymbol{\mu}_{\mathcal{S}},H_{\mathcal{K}})\right) \\ \nonumber
	& ~~~~~\leq \frac{1}{\beta}\textrm{log} \det\left( P \sum_{j \in \mathcal{S}} \mathbf{Q}_{j}\mathbf{Q}_{j}^{H} + I \right) \\ \nonumber 
	& ~~~~~~~ - \frac{1}{\beta} \textrm{log} \det\left( P \sum_{j \in \mathcal{S}} \bar{\mathbf{Q}}_{j}\bar{\mathbf{Q}}_{j}^{H} + I \right) + o(\textrm{log}(P)) \\ \nonumber
		& ~~~~~= o(\textrm{log}(P)),
\end{align}
where the last equality follows from the fact that $\mathbf{Q}_{j}\mathbf{Q}_{j}^{H}$ and $\bar{\mathbf{Q}}_{j}\bar{\mathbf{Q}}_{j}^{H}$ have the same rank almost surely. Then,
\begin{align}
	\Delta_{\mathcal{S}}^{[k]} & = 1 - \frac{I(\mathbf{W}_{\mathcal{S}}; Y_{k}^{n}|H_{\mathcal{K}})}{nR_{\mathcal{S}}} \geq 1 - \frac{o(\textrm{log}(P))}{d_{\mathcal{S}}\textrm{log}(P) + o(\textrm{log}(P))}
\end{align}
with $d_{\mathcal{S}} = \frac{(K-1)(M-1)\alpha}{\beta}$, which yields $\lim_{n,P \rightarrow \infty} \Delta_{\mathcal{S}}^{[k]} = 1$. This concludes the proof.

\bibliographystyle{IEEEtran}
\bibliography{IEEEabrv,MyLibraryANA}

\end{document}